\newtheorem{theorem}{Theorem}[section]
\newtheorem{prop}[theorem]{Proposition}
\newtheorem{lemma}[theorem]{Lemma}
\newtheorem{cor}[theorem]{Corollary}
\theoremstyle{definition}
\newtheorem{exam}[theorem]{Example}
\newtheorem{remark}[theorem]{Remark}
\newtheorem{defn}[theorem]{Definition}
\numberwithin{equation}{section}
\newcommand{\subhead}[1]{ \hspace{1em}{\bf #1} \newline}
\def\pa{\partial}
\def\op{\operatorname}
\def\ft{\footnote}
\def\wt{\widetilde}
\def\wh{\widehat}
\def\Bibitem{\bibitem}
\newcommand{\jour}[1]{{\it #1}, }
\newcommand{\yr}[1]{(#1) }
\newcommand{\pages}[1]{p. #1. }
\newcommand{\paper}[1]{``#1'', }
\newcommand{\by}[1]{{#1}, }
\newcommand{\publ}[1]{#1, }
\def\book{}
\newcommand{\issue}[1]{:#1, }
\newcommand{\vol}[1]{{\bf #1} }
\newcommand{\inbook}{ {\it in the book} }
\newcommand{\transl}{}
\def\nofrills{}
\def\publaddr{}
\def\serial{}
\begin{document}

\title[New Class of Solutions to the WDVV Associativity Equations]{Linearly Degenerate Hamiltonian PDEs and a New Class of Solutions to the WDVV Associativity Equations}

\author{B.~A.~Dubrovin}
\address{B.~A.~Dubrovin, Scuola Internazionale Superiore di Studi Avanzati, Trieste;
Moscow State University,
Laboratory of Geometric methods in Mathematical Physics;
V. A.~Steklov Mathematical Institute, Moscow}
\email{dubrovin@sissa.it}
\thanks{This work is partially supported by the European Research
Council Advanced Grant FroM-PDE, by the Russian Federation Government Grant No. 2010-220-01-077
(ag. \#11.G34.31.0005), and by PRIN 2008 Grant ``Geometric methods in the theory of nonlinear waves
and their applications'' of Italian Ministry of Universities and Researches.}

\author{M.~V.~Pavlov}
\address{M.~V.~Pavlov, Moscow State University,
Laboratory of Geometric methods in Mathematical Physics;\\
P. N.~Lebedev Physical Institute of RAS}
\email{maxim@math.sinica.edu.tw}
\thanks{M.~V.~Pavlov is partially supported by the grant of Presidium
of RAS \textquotedblleft Fundamental Problems of Nonlinear Dynamics\textquotedblright\ and by
RFBR grant 11-01-00197. M.~V.~Pavlov and S.~A.~Zykov are
grateful to the SISSA in Trieste (Italy), where a part of this work has been
done.}

\author{S.~A.~Zykov}
\address{S.~A.~Zykov, University of Salento, Lecce,Department of Physics;\\
Institute of Metal Physics, Ural branch of RAS, Ekaterinburg}
\email{zykov@le.infn.it}
\thanks{S.~A.~Zykov is also partially supported by the INFN Section in Lecce, project No.~LE41.}

\keywords{Frobenius manifold, WDVV associativity equations, linearly degenerate
PDEs, algebraic Riccati equation.
}

\date{May 30, 2011}

\begin{abstract}
We define a new class of solutions to the WDVV associativity equations. This class is
determined by the property that one of the commuting PDEs associated with such a WDVV solution
is linearly degenerate. We reduce the problem of classifying such
solutions of the WDVV equations to the
particular case of the so-called algebraic Riccati equation and, in this way, arrive
at a complete classification of irreducible solutions. \end{abstract}

\maketitle

\rightline{{\it To the memory of V. I. Arnold}}

\tableofcontents

\section{Introduction}

The Witten--Dijkgraaf--E.~Verlinde--H.~Verlinde (WDVV) system of
associativity
equations is the
overdetermined system of partial differential equations
\begin{equation}\label{wdvv}
\frac{\pa^3 F}{\pa v^\alpha\pa v^\beta \pa v^\lambda}\,\eta^{\lambda\mu}\frac{\pa^3 F}{\pa
v^\mu\pa v^\gamma \pa v^\delta}=\frac{\pa^3 F}{\pa v^\delta\pa v^\beta \pa v^\lambda}\,
\eta^{\lambda\mu}\frac{\pa^3 F}{\pa v^\mu\pa v^\gamma \pa v^\alpha}, \qquad \alpha, \beta, \gamma,
\delta=1, \dots, n,
\end{equation}
for a function $F=F({\bf v})$, ${\bf v}=(v^1,\dots,v^n)$, satisfying the conditions
$$
\frac{\pa^3 F}{\pa v^\alpha\pa v^\beta \pa v^1} =\eta_{\alpha\,\beta}.
$$
Here $(\eta_{\alpha\beta})_{1\le\alpha,\beta\le n}$ and $(\eta^{\alpha\beta})_{1\le\alpha,\beta\le
n}$ are mutually inverse constant symmetric nonsingular matrices, that is,
$\eta_{\alpha\lambda}\eta^{\lambda\beta}=\delta_\alpha^\beta$.
Throughout this section summation over repeated Greek indices will be assumed.

Recall \cite{monte} that the solutions to the WDVV associativity equations are in one-to-one
correspondence with the $n$-parameter families
of $n$-dimensional commutative associative algebras
$$
\mathcal{A}_{\bf v}=\op{span}(e_1, \dots, e_n)
$$
with a unit $e=e_1$ equipped with a symmetric nondegenerate invariant bilinear form
$(\kern6pt,\kern5pt)$ such that the structure constants are expressed via the third
derivatives of a
function $F$, called the \emph{potential}:
\begin{align*}
e_\alpha \cdot e_\beta&=c_{\alpha\beta}^\gamma({\bf v}) e_\gamma, \qquad \alpha, \beta=1, \dots, n,\\
e_1\cdot e_\alpha&=e_\alpha\quad \text{for any }\,\alpha,\\
(e_\alpha, e_\beta)&=\eta_{\alpha\beta},\\
(e_\alpha \cdot e_\beta, e_\gamma)&= (e_\alpha, e_\beta\cdot e_\gamma)=\eta_{\gamma\lambda}
c_{\alpha\beta}^\lambda({\bf v})=\smash[t]{\frac{\pa^3 F({\bf v})}
{\pa v^\alpha \pa v^\beta \pa v^\gamma}}\,.
\end{align*}

If, in addition, the function $F$ satisfies a certain quasi-homogeneity
condition, then one arrives at
a local description of \emph{Frobenius manifolds} (see details in \cite{monte}).
On these manifolds the natural metric
\begin{equation}\label{metr}
ds^2 =\eta_{\alpha\beta}\,dv^\alpha\,dv^\beta
\end{equation}
(not necessarily positive definite) is defined.
The variables $v^1$, \dots, $v^n$
are \emph{flat coordinates} for this metric. The algebra ${\mathcal{A}}_{\bf v}$ is identified with
the tangent space to the manifold at the point ${\bf v}$:
$$
e_\alpha\leftrightarrow \frac{\pa}{\pa v^\alpha}\,;
$$
see \cite{monte} for more details about the coordinate-free geometric description of Frobenius
manifolds.

A solution to the associativity equations \eqref{wdvv} is called \emph{semisimple} if the algebra
${\mathcal{A}}_{\bf v}$ has no nilpotent elements for a generic point ${\bf v}$. It was
proved in \cite{npb} that, in the semisimple
case, there exist
local \emph{canonical coordinates} $u_i=u_i({\bf v})$,
$i=1, \dots, n$, such that the multiplication table takes the
standard form
$$
\frac{\pa }{\pa u_i} \cdot \frac{\pa }{\pa u_j}=\delta_{ij}\,\frac{\pa }{\pa u_i}\,.
$$
The metric \eqref{metr} becomes diagonal in these canonical coordinates:
$$
ds^2 =\sum_{i=1}^n h_i^2({\bf u})\,du_i^2.
$$
Moreover, this is a \emph{Egorov} metric (see \cite{egor}), which means that
the \emph{rotation coefficients}
\begin{equation}\label{rot1}
\gamma_{ij}({\bf u})=\frac1{h_j}\,\frac{\pa h_i}{\pa u_j}
\end{equation}
are symmetric in $i$ and $j$, i.e., $\gamma_{ji}=\gamma_{ij}$.
They satisfy the following system of
\emph{Darboux--Egorov equations} \cite{darboux}:
\begin{alignat}2
\label{egor1} \frac{\pa \gamma_{ij}}{\pa u_k}&=\gamma_{ik} \gamma_{kj}&&\qquad
\text{for distinct }i,j,k,\\
\sum_{k=1}^n \frac{\pa \gamma_{ij}}{\pa u_k}&=0&&\qquad \text{for } i\ne j. \label{egor11}
\end{alignat}
Any solution to the Darboux--Egorov equations comes from a semisimple solution to the WDVV
associativity equations. The reconstruction procedure of the latter involves solutions to the
following system of linear differential equations for a vector-function $\psi=(\psi_1({\bf
u}),\dots,\psi_n({\bf u}))$:
\begin{align}\label{egor2}
\frac{\pa \psi_i}{\pa u_j}&=\gamma_{ij} \psi_j, \qquad i\neq j,\\
\sum_{k=1}^n \frac{\pa \psi}{\pa u_k}&=0. \label{egor3}
\end{align}
Let $\psi_{i\alpha}=\psi_{i\alpha}({\bf u})$, $\alpha=1, \dots, n$, be
a system of $n$ linearly
independent solutions to system \eqref{egor2}, \eqref{egor3}.
The reconstruction depends on a choice of one
of these solutions to be identified with the Lam\'e coefficients of the invariant metric
\eqref{metr}; suppose that the chosen solution corresponds to
$\alpha=1$, that is, $h_i=\psi_{i1}$. Then
\begin{align*}
\eta_{\alpha\beta}&=\sum_{i=1}^n \psi_{i\alpha}\psi_{i\beta},\\
dv_\alpha&= \sum_{i=1}^n \psi_{i\alpha}\psi_{i1}\,du_i,\\
\frac{\pa^3 F}{\pa v^\alpha\pa v^\beta\pa v^\gamma}&=\sum_{i=1}^n
\frac{\psi_{i\alpha}\psi_{i\beta}\psi_{i\gamma}}{\psi_{i1}}\,.
\end{align*}
We also mention the following formula
for the differentials of the second derivatives
\begin{equation}\label{egor41}
\Omega_{\alpha\beta}=
\frac{\pa^2 F}{\pa v^\alpha \pa v^\beta}
\end{equation}
of the potential $F$:
\begin{equation}\label{egor4}
d\Omega_{\alpha\beta}=\sum_{i=1}^n \psi_{i\alpha} \psi_{i\beta}\,du_i.
\end{equation}

As shown in \cite{funcan}, the Darboux--Egorov
system \eqref{egor1}--\eqref{egor11} can be
identified with a special reduction of the $n$-wave
system
well known in the theory of integrable PDEs and written in the form suggested in \cite{77}. It can also be
embedded in the framework of the
$n$KP system (see, e.g., \cite{leur}). All known particular solutions
to the associativity equations
correspond to further reductions of the $n$-wave system
to a system of ODEs. For example, the
semisimple Frobenius manifolds are determined by the homogeneity condition on the
rotation coefficients, or
the \emph{scaling} reduction
$$
\sum_{k=1}^n u_k\,\frac{\pa \gamma_{ij}}{\pa u_k} =-\gamma_{ij}, \qquad i\neq j.
$$
This condition corresponds to the quasi-homogeneity axiom of the
theory of Frobenius manifolds (see \cite{npb} and
\cite{monte}). Other particular classes of solutions (such as solitons,
algebro-geometric solutions, and
degenerate Frobenius manifolds) also naturally arise in the framework of the $n$-wave system.

In this paper we introduce another class of solutions to the WDVV
equations. Before describing this class, we
recall the connection between the associativity equations
and integrable hierarchies. Let
$\theta=\theta({\bf v})$ be a solution to the system
of linear differential equations
\begin{equation}\label{hier1}
\frac{\pa^2 \theta}{\pa v^\alpha\pa v^\beta}=c_{\alpha\beta}^\gamma\,\frac{\pa^2 \theta}{\pa v^1 \pa
v^\gamma}, \qquad \alpha, \beta=1, \dots, n.
\end{equation}
Consider the following system of first-order quasilinear PDEs for the
vector-function ${\bf v}={\bf v}(x,t)$:
\begin{equation}\label{hier2}
{\bf v}_t = [\nabla \theta({\bf v})]_x.
\end{equation}
This is a Hamiltonian PDE with Hamiltonian
$H=\int \theta({\bf v})\, dx$ and Poisson bracket $\{ v^\alpha(x), v^\beta(y)\}
=\eta^{\alpha\beta} \delta'(x-y)$ (see \cite{dn}). All Hamiltonian systems of the
form \eqref{hier1}, \eqref{hier2} pairwise commute. Moreover,
Hamiltonians \eqref{hier1} satisfy certain \emph{completeness} conditions (see \cite{tsarev}).
Thus, any such system \eqref{hier2} can be considered as a
completely integrable Hamiltonian system of PDEs.

In the semisimple case all such PDEs diagonalize in the canonical coordinates, i.e.,
\begin{equation}\label{hier3}
{\bf u}_t =\Lambda({\bf u}) {\bf u}_x, \qquad \Lambda({\bf u}) =\op{diag} (\lambda_1({\bf u}),
\dots, \lambda_n({\bf u})).
\end{equation}
Thus, the canonical coordinates are \emph{Riemann invariants} for the quasilinear systems
\eqref{hier2}. For a generic solution to \eqref{hier1}, the characteristic
velocities are pairwise
distinct, i.e.,
\begin{equation}\label{hier4}
\lambda_i({\bf u})\neq \lambda_j({\bf u}),\qquad i\neq j,
\end{equation}
at a generic point ${\bf u}$.

\begin{defn}
A semisimple solution $F({\bf v})$ to the WDVV associativity
equations is called \emph{linearly
degenerate} if among the commuting PDEs \eqref{hier1}--\eqref{hier3} there exists at least one
satisfying \eqref{hier4} along with the condition
$$
\frac{\pa \lambda_i({\bf u})}{\pa u_i}=0, \qquad i=1, \dots, n.
$$
\end{defn}

The motivation for our terminology is that one of the quasilinear systems of the commuting family
\eqref{hier1}--\eqref{hier3} is linearly degenerate, i.e., the $i$th characteristic velocity
$\lambda_i$ does not depend on the $i$th Riemann invariant $u_i$
for every $i$ from $i=1$ to $i=n$.

The main goal of the present paper is to classify linearly degenerate
solutions to the WDVV
associativity equations. Such a solution is called \emph{reducible} if, for some $i$, one has
$\gamma_{ij}({\bf u})\equiv 0$ for all $j\neq i$. Otherwise
it will be called
\emph{irreducible}. It suffices to classify irreducible linearly degenerate solutions.

\begin{theorem} \label{mainth}
The rotation coefficients of an irreducible linearly degenerate solution to the WDVV associativity
equations has the form
\begin{equation}\label{glav1}
\gamma_{ij}(u) =\frac{[G (1-\frac1{\rho}\op{tanh}\rho U\cdot G)^{-1}]_{ij}}
{\op{cosh}\rho u_i\op{cosh}
\rho u_j}, \qquad i, j=1, \dots, n, \;i\neq j,
\end{equation}
where $U=\op{diag}(u_1, \dots, u_n)$ and $G$ is a symmetric matrix
satisfying the condition $G^2 =\rho^2\cdot 1$,
in which $\rho$ is an arbitrary complex parameter.
\end{theorem}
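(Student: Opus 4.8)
The plan is to translate the linear--degeneracy hypothesis into differential equations for the Darboux--Egorov data $(\gamma_{ij},h_i)$ and then to recognize the resulting system as a compatible family of matrix Riccati equations, which can be integrated explicitly. Recall that a semisimple WDVV solution is encoded by rotation coefficients $\gamma_{ij}(u)$ solving \eqref{egor1}--\eqref{egor11} together with the distinguished solution $h_i=\psi_{i1}$ of the linear system \eqref{egor2}--\eqref{egor3}, and that the commuting PDEs \eqref{hier1}--\eqref{hier3} come from other solutions $\phi=(\phi_i)$ of the same linear system, with characteristic velocities $\lambda_i=\phi_i/h_i$. Feeding \eqref{egor2}--\eqref{egor3} into $\lambda_i=\phi_i/h_i$ gives at once the identities $\partial_j\lambda_i=\gamma_{ij}(h_j/h_i)(\lambda_j-\lambda_i)$ for $i\neq j$, while the condition $\partial_i\lambda_i=0$ of the definition, combined with $\sum_k\partial_k\phi_i=0$, yields the additional algebraic relations $\sum_{k\neq i}\gamma_{ik}h_k(\lambda_k-\lambda_i)=0$. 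So the first step is to reformulate the problem as: classify $(\gamma_{ij},h_i,\lambda_i)$ satisfying \eqref{egor1}--\eqref{egor11}, \eqref{egor2}--\eqref{egor3}, the two displayed relations, and $\lambda_i\neq\lambda_j$ for $i\neq j$.

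Second, and this is the heart of the argument, I would show that these relations force $\partial_k\ln\gamma_{ik}$ to be independent of $i$ for each $k$ (this fails for generic Darboux--Egorov systems, so it is here that the linear--degeneracy condition does real work; one must use $\partial_i\lambda_i=0$ for all $i$ together with the sum rules \eqref{egor11}, \eqref{egor3}, not any one of them alone). Granting this, there are one--variable functions $g_i=g_i(u_i)$ such that the symmetric matrix $N$ with $N_{ij}=\gamma_{ij}g_ig_j$ for $i\neq j$, completed along the diagonal in the only way consistent with the two--index equations, satisfies the compatible system
\begin{equation*}
\frac{\partial N}{\partial\tau_k}=N\,E_k\,N,\qquad \tau_k=\int\frac{du_k}{g_k(u_k)^2},\qquad k=1,\dots,n,
\end{equation*}
where $E_k$ is the orthogonal projector onto the $k$-th coordinate axis; the three--index equations \eqref{egor1} become this system essentially for free, while the remaining equations pin down the $g_i$ as solutions of one and the same linear ODE $g_i''=\rho^2 g_i$ with a single constant $\rho$. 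After the admissible normalizations this means $g_i=\cosh\rho u_i$, hence $\tau_k=\rho^{-1}\tanh\rho u_k$. This is precisely the announced reduction of the classification problem to an algebraic Riccati equation.

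Third, one integrates: from $\partial_{\tau_k}N^{-1}=-E_k$ one gets $N^{-1}=G^{-1}-\op{diag}(\tau_1,\dots,\tau_n)=G^{-1}-\rho^{-1}\tanh(\rho U)$ for a constant matrix $G$, necessarily with $G_{ij}=\gamma_{ij}(0)$ off the diagonal, whence $N=G\,(1-\rho^{-1}\tanh(\rho U)\,G)^{-1}$ and $\gamma_{ij}=N_{ij}/(\cosh\rho u_i\cosh\rho u_j)$, which is \eqref{glav1}. It then remains to read off the constraints on $G$. Symmetry $\gamma_{ij}=\gamma_{ji}$ forces $G=G^T$; and substituting \eqref{glav1} into the sum rule \eqref{egor11} reduces, after the dust settles, to the requirement that $N(1-\rho^2 G^{-2})N$ be a diagonal matrix for all $u$. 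Differentiating this along the Riccati flows and using irreducibility (no row of $G$ vanishes identically) gives $1-\rho^2G^{-2}=0$, i.e.\ $G^2=\rho^2\cdot1$. Conversely, for any symmetric $G$ with $G^2=\rho^2\cdot1$ and no vanishing row one checks directly that \eqref{glav1} solves \eqref{egor1}--\eqref{egor11} and that the associated WDVV solution carries a linearly degenerate commuting flow satisfying \eqref{hier4}, which completes the classification.

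The step I expect to be the real obstacle is the middle one: proving that linear degeneracy, rather than just the ``tautologically closed'' identities $\partial_j\lambda_i=\gamma_{ij}(h_j/h_i)(\lambda_j-\lambda_i)$, forces the $i$-independence of $\partial_k\ln\gamma_{ik}$ and, with it, the existence of the one--variable functions $g_i$ obeying a common second--order ODE; this is where the whole $\cosh/\tanh$ structure, and the single parameter $\rho$, must be produced, and where the bulk of the computation lives. Once that closure is in hand, the Riccati integration and the identification $G^2=\rho^2\cdot1$ are comparatively routine, as is the verification of the converse direction.
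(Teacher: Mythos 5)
Your outline follows essentially the same route as the paper: linear degeneracy closes the Darboux--Egorov system into $\partial\Gamma/\partial u_k=\Gamma E_k\Gamma+\sigma_k(u_k)E_k$, a one-variable gauge reduces this to the basic system, which integrates to $G(1-UG)^{-1}$, and the remaining trace condition \eqref{egor11} becomes an algebraic Riccati equation for $G$ whose $[SL_2]^n$-normalization is $G^2=\rho^2\cdot 1$. The two computations you flag as the real work are precisely the paper's Lemma 2.1 (differentiating $\partial_k(\phi_k/h_k)=0$ in $u_i$, $i\neq k$, yields $\partial_k\log\gamma_{ik}=\gamma_{kk}$, independent of $i$) and Lemma 4.7 (orthogonality of root subspaces of the symmetric matrix $G$ together with irreducibility forces a single value $\rho^2$ of the invariant $\Delta_i$), so your plan is correct and matches the published argument.
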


For $\rho =0$, the above formulas are considered in the sense of
the limits
$$
\frac1{\rho}\op{tanh} \rho U\to U, \qquad \op{cosh} \rho u_i\to 1.
$$

The paper is organized as follows. In Section 2 we recall the necessary constructions of the theory
of the WDVV associativity equations and derive the basic
system of differential equations
\eqref{main} of the theory of linearly degenerate solutions to the WDVV equations.
In Section 3 we solve the basic
system and describe its symmetry group acting by
fractional linear transformations. In Section 4
we select those solutions to the basic system that give rise to the
WDVV equations and derive the matrix algebraic Riccati
equation. Using the symmetries of this equation, we classify all irreducible
linearly degenerate
solutions to the WDVV associativity equations.

\subhead{Acknowledgments} The authors thank Evgenii Ferapontov and
Sergei Tsarev for stimulating
and clarifying discussions.

\section{Linearly Degenerate Solutions to the WDVV Associativity Equations}

Let $\Gamma=(\gamma_{ij}({\bf u}))_{1\le i,j\le n}$ be the symmetric matrix of rotation
coefficients\ft{Actually, in the differential geometry of curvilinear orthogonal coordinate systems
only the off-diagonal entries of the matrix $\Gamma$ are called rotation coefficients. However, in
our case it will be convenient to add the diagonal entries $\gamma_{ii}=\pa \log h_i/\pa
u_i$.} \eqref{rot1} of a linearly degenerate irreducible solution to the associativity equations.

\begin{lemma}
The matrix-valued function $\Gamma=\Gamma({\bf u})$ satisfies the
differential equations
\begin{equation}\label{delta1}
\frac{\pa \Gamma}{\pa u_k} =\Gamma E_{k} \Gamma +\sigma_k(u_k) E_{k}, \qquad k=1, \dots, n,
\end{equation}
with some functions $\sigma_1(u_1),\dots,\sigma_n(u_n)$. Here $E_{k}$ is a
matrix with only one
nonzero entry, namely,
\begin{equation}\label{unity}
(E_{k})_{ij}=\delta_{ik} \delta_{jk}.
\end{equation}
\end{lemma}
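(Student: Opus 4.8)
The plan is to recast the lemma as a compatibility statement for a first-order linear system, and to use the linear degeneracy hypothesis to supply enough solutions of that system; throughout write $\partial_k=\partial/\partial u_k$. \emph{Reformulation.} With the diagonal entries of $\Gamma$ defined, as in the footnote, by $\gamma_{kk}=\partial_k\log h_k$, and the off-diagonal ones governed by \eqref{egor2} applied to $h=(\psi_{11},\dots,\psi_{n1})$, the single relation $\partial_k\psi_i=\gamma_{ik}\psi_k$ holds for $\psi=h$ and \emph{all} $i,k$; equivalently $\partial_k h=\Gamma E_k h$, with $E_k$ as in \eqref{unity}. The key observation is that there is a second, pointwise linearly independent, solution of the same system. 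Let $\lambda_i$ be the characteristic velocities of the linearly degenerate member of the commuting family \eqref{hier1}--\eqref{hier3}, and put $\phi_i:=\lambda_i h_i$. Using the standard relation $\partial_j\lambda_i=(\lambda_j-\lambda_i)\gamma_{ij}h_j/h_i$ for $i\ne j$, valid between the velocities of a Hamiltonian flow of hydrodynamic type and the Lam\'e coefficients of its invariant metric $ds^2$ (see \cite{tsarev}), a one-line computation gives $\partial_j\phi_i=\gamma_{ij}\phi_j$ for $i\ne j$; and for $i=j$ the relation $\partial_i\phi_i=\gamma_{ii}\phi_i$ is precisely the linear degeneracy condition $\partial_i\lambda_i=0$. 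Hence also $\partial_k\phi=\Gamma E_k\phi$. This is the only step in which the hypothesis of the lemma is used.

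\emph{Compatibility.} For any solution $\psi$ of $\partial_k\psi=\Gamma E_k\psi$ one has $\partial_l\partial_k\psi-\partial_k\partial_l\psi=M_{kl}\psi$, where $M_{kl}:=(\partial_l\Gamma)E_k+\Gamma E_k\Gamma E_l-(\partial_k\Gamma)E_l-\Gamma E_l\Gamma E_k$. Inspecting this expression, $M_{kl}$ can have nonzero entries only in columns $k$ and $l$, with $(M_{kl})_{ik}=\partial_l\gamma_{ik}-\gamma_{il}\gamma_{lk}$ and $(M_{kl})_{il}=\gamma_{ik}\gamma_{kl}-\partial_k\gamma_{il}$. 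Applying $M_{kl}\psi=0$ to $\psi=h$ and to $\psi=\phi$, and using that the $2\times2$ matrix with rows $(h_k,h_l)$ and $(\lambda_k h_k,\lambda_l h_l)$ is invertible at a generic point, its determinant being $h_k h_l(\lambda_l-\lambda_k)\ne0$ by \eqref{hier4}, one concludes $M_{kl}=0$, that is,
\[
\partial_l\gamma_{ik}=\gamma_{il}\gamma_{lk}\qquad\text{for all }i\text{ and all }l\ne k.
\]
Since $(\Gamma E_k\Gamma)_{ij}=\gamma_{ik}\gamma_{kj}$, these are exactly the scalar components of \eqref{delta1} other than the $(k,k)$ entry (they contain \eqref{egor1} as the subcase of pairwise distinct indices).

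\emph{The diagonal entry, and the main obstacle.} The $(k,k)$ component of \eqref{delta1} forces $\sigma_k:=\partial_k\gamma_{kk}-\gamma_{kk}^2$, and it remains only to check that this depends on $u_k$ alone, i.e.\ $\partial_l\sigma_k=0$ for $l\ne k$. From the relations just proved, $\partial_l\gamma_{kk}=\gamma_{kl}^2$ and $\partial_k\gamma_{kl}=\gamma_{kk}\gamma_{kl}$, whence $\partial_k\partial_l\gamma_{kk}=\partial_k(\gamma_{kl}^2)=2\gamma_{kk}\gamma_{kl}^2=2\gamma_{kk}\partial_l\gamma_{kk}$, which gives $\partial_l\sigma_k=0$, as required. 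The main obstacle is conceptual rather than computational: one has to recognize that the natural framework is the extended linear system $\partial_k\psi=\Gamma E_k\psi$ and that linear degeneracy is exactly what upgrades the velocity vector $(\lambda_i h_i)$ to a second solution of it; once this is seen, only elementary linear algebra and the Darboux--Egorov/Tsarev identities are needed. Two minor points of care: all identities are first obtained on the open set where the $\lambda_i$ are pairwise distinct and the $h_i$ are nonzero, and then extended by analyticity; and the bookkeeping of which columns of $M_{kl}$ can survive must be done carefully.
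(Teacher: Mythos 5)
Your proof is correct, and while it rests on the same essential input as the paper's --- namely that linear degeneracy promotes $\phi_i=\lambda_i h_i$ to a second solution of the full linear system $\pa_k\psi=\Gamma E_k\psi$ (whose diagonal equations $\pa_i\phi_i=\gamma_{ii}\phi_i$ encode exactly $\pa_i\lambda_i=0$), together with Tsarev's relation $\pa_j\lambda_i=(\lambda_j-\lambda_i)\gamma_{ij}h_j/h_i$ --- the mechanism you use to extract the equations is genuinely different. The paper differentiates $\pa_k(\phi_k/h_k)=0$ in $u_i$ and factors the result as $\tfrac{h_i}{h_k}(\lambda_i-\lambda_k)\gamma_{ik}\,\pa_k[\log\gamma_{ik}-\log h_k]=0$, so it must divide by $\gamma_{ik}$, and this is precisely where the irreducibility hypothesis enters; it also handles the three index configurations ($i,j,k$ distinct; $k=j\ne i$; $i=j\ne k$) by three separate computations. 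You instead form the compatibility operator $M_{kl}$, observe it is supported on columns $k$ and $l$, and kill it with the Wronskian-type $2\times2$ matrix built from $h$ and $\phi$, whose determinant $h_kh_l(\lambda_l-\lambda_k)$ is nonzero by \eqref{hier4}. This yields all components $\pa_k\gamma_{ij}=\gamma_{ik}\gamma_{kj}$ ($j\ne k$) in one stroke, rederives the Darboux--Egorov equations \eqref{egor1} rather than assuming them, and --- notably --- never divides by $\gamma_{ik}$, so your argument does not use irreducibility at all and proves the lemma in slightly greater generality. The final step, that $\sigma_k=\pa_k\gamma_{kk}-\gamma_{kk}^2$ depends on $u_k$ alone, is identical to the paper's. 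Your cautionary remarks about working on the open set where $\lambda_i\ne\lambda_j$ and $h_i\ne0$ and extending by analyticity are appropriate and match what the paper leaves implicit.
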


\begin{proof}{Proof}
By construction the equations
\begin{equation}\label{delta2}
\frac{\pa \gamma_{ij}}{\pa u_k} =\gamma_{ik}
\gamma_{kj}
\end{equation}
hold true for distinct values of the indices $i$, $j$, and $k$.
Let us first prove that
\eqref{delta2} also holds when $k=i$ or $k=j$ and
$i\neq j$ or when $i=j$ but $k\neq i$.

According to \cite{npb}, the characteristic velocities $\lambda_k({\bf u})$
of the commuting PDEs
\eqref{hier1}--\eqref{hier3} can be represented in the form
$$
\lambda_k({\bf u}) =\frac{\phi_k({\bf u})}{h_k({\bf u})}, \qquad k=1, \dots, n,
$$
where the vector-function $\phi=(\phi_1({\bf u}),\dots,\phi_n({\bf u}))$ satisfies the system of
linear differential equations
\begin{equation}\label{psi2}
\frac{\pa\phi_i}{\pa u_j} = \gamma_{ij}\phi_j, \qquad i\neq j.
\end{equation}
In particular, $\phi_k=h_k$ is one of the solutions to \eqref{psi2}.
Let $\phi$ be the solution to
\eqref{psi2} corresponding to a linearly degenerate member of the commuting family
\eqref{hier1}--\eqref{hier3}. Differentiating the equation
$$
\frac{\pa}{\pa u_k} \left(\frac{\phi_k}{h_k}\right) =0
$$
in $u_i$ with $i\neq k$, we obtain the
equation
$$
\frac{h_i}{h_k} (\lambda_i-\lambda_k)\,\gamma_{ik}\,\frac{\pa}{\pa u_k}[ \log \gamma_{ik} - \log
h_k]=0.
$$
Due to the assumptions of irreducibility and \eqref{hier4}, we arrive at the equation
$$
\frac{\pa \log\gamma_{ik}}{\pa u_k} =\frac{\pa \log h_k}{\pa u_k}=\gamma_{kk}.
$$
This proves \eqref{delta2} for the case where $k=j$ and $i\neq j$.
Next, assuming that $k\neq i$, one has
$$
\frac{\pa \gamma_{ii}}{\pa u_k} =\frac{\pa}{\pa u_i}\,\frac{\pa \log h_i}{\pa u_k} =\frac{\pa}{\pa
u_i}\left(\gamma_{ik} \frac{h_k}{h_i}\right) =\gamma_{ik}^2.
$$
Thus, Eq. \eqref{delta2} with $i=j$ and $k\neq i$ is also
verified. The last step is to verify
that the difference $\sigma_i:=\pa \gamma_{ii}/\pa u_i -\gamma_{ii}^2$ depends only on $u_i$.
Indeed, for $k\neq i$,
$$
\frac{\pa }{\pa u_k} \left(\frac{\pa \gamma_{ii}}{\pa u_i} -\gamma_{ii}^2\right) =\frac{\pa }{\pa
u_i}\frac{\pa \gamma_{ii}}{\pa u_k} -2 \gamma_{ii} \gamma_{ik}^2 =\frac{\pa\gamma_{ik}^2}{\pa u_i}
-2 \gamma_{ii} \gamma_{ik}^2=0.\eqno{\Box}
$$
\end{proof}

Now, let us describe a class of transformations
$$
u_k\mapsto \tilde u_k, \qquad \gamma_{ij}\mapsto \tilde \gamma_{ij}
$$
which leave system \eqref{delta1}
invariant.

\begin{lemma} \label{lemma22}
The substitution
\begin{equation}\label{trans1}
\begin{alignedat}2
\tilde u_k&= f_k(u_k),&\qquad &k=1, \dots, n,\\
\tilde\gamma_{ij}&= \frac{\gamma_{ij}}{\sqrt{f_i'(u_i) f_j'(u_j)}}-\frac{f_i''(u_i)}{2
[f_i'(u_i)]^2}\,\delta_{ij},&\qquad & i, j=1, \dots, n,
\end{alignedat}
\end{equation}
with arbitrary nonconstant smooth functions $f_1(u_1),\dots,f_n(u_n)$ leaves invariant the form
of\break
Eqs.~\eqref{delta1}, which transform into
$$
\frac{\pa \wt\Gamma}{\pa \tilde u_k} =\wt\Gamma E_{k} \wt\Gamma +\tilde \sigma_k(\tilde u_k) E_{k},
\qquad k=1, \dots, n,
$$
with ${f_k'}^2 \tilde\sigma_k =\sigma_k -\frac12 S_{u_k}(f_k)$. Here $S_u(f)$ is the Schwarzian
derivative of a function $f=f(u)$, that is,
$$
S_u(f) =\frac{f'''}{f'} -\frac32 \frac{{f''}^2}{{f'}^2}\,.
$$
\end{lemma}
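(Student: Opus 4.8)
The statement is a straightforward verification, which is cleanest in matrix form. The plan is to put $d_k := [f_k'(u_k)]^{-1/2}$, $D := \op{diag}(d_1^{-1},\dots,d_n^{-1})$, and $A := \op{diag}(a_1,\dots,a_n)$ with $a_k := -f_k''(u_k)/(2[f_k'(u_k)]^2)$, so that the substitution \eqref{trans1} reads simply
$$
\wt\Gamma = D^{-1}\Gamma D^{-1} + A.
$$
Everything then reduces to three elementary facts. First, since $D$ and $A$ are diagonal and $E_k$ is the matrix \eqref{unity}, one has $D^{-1}E_k = E_kD^{-1} = d_kE_k$ and $AE_k = E_kA = a_kE_k$, and $E_k^2 = E_k$. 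Second, $D^{-1}$ and $A$ depend on $u_k$ only through their $(k,k)$ entries, so $\pa D^{-1}/\pa u_k = (\pa_{u_k}d_k)E_k$ and $\pa A/\pa u_k = (\pa_{u_k}a_k)E_k$. Third, $\pa/\pa\tilde u_k = d_k^2\,\pa/\pa u_k$, because $\tilde u_k = f_k(u_k)$ and $d_k^2 = 1/f_k'(u_k)$.

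Differentiating $\wt\Gamma = D^{-1}\Gamma D^{-1} + A$ in $u_k$ and inserting \eqref{delta1} for $\pa\Gamma/\pa u_k$, the three facts above collapse the result into
$$
\frac{\pa\wt\Gamma}{\pa\tilde u_k} = d_k^2\Bigl[(\pa_{u_k}d_k)\bigl(E_k\Gamma D^{-1} + D^{-1}\Gamma E_k\bigr) + D^{-1}\Gamma E_k\Gamma D^{-1} + \bigl(\sigma_k d_k^2 + \pa_{u_k}a_k\bigr)E_k\Bigr].
$$
On the other hand, expanding $\wt\Gamma E_k\wt\Gamma + \tilde\sigma_k E_k$ and using the same facts to collapse $D^{-1}E_kD^{-1} = d_k^2E_k$, $AE_kA = a_k^2E_k$, $D^{-1}E_kA = d_ka_kE_k$, one gets
$$
\wt\Gamma E_k\wt\Gamma + \tilde\sigma_k E_k = d_k^2\,D^{-1}\Gamma E_k\Gamma D^{-1} + a_kd_k\bigl(E_k\Gamma D^{-1} + D^{-1}\Gamma E_k\bigr) + \bigl(a_k^2 + \tilde\sigma_k\bigr)E_k.
$$
Now I would match the three manifestly independent shapes $D^{-1}\Gamma E_k\Gamma D^{-1}$, $E_k\Gamma D^{-1} + D^{-1}\Gamma E_k$, and $E_k$. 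The first coefficients agree identically ($d_k^2 = d_k^2$). The second match iff $d_k^2\,\pa_{u_k}d_k = a_kd_k$, i.e.\ $\pa_{u_k}(d_k^2) = 2a_k$; since $d_k^2 = 1/f_k'(u_k)$, the left side equals $-f_k''/[f_k']^2$, which is exactly $2a_k$ by definition. Equating the $E_k$-coefficients then forces
$$
\tilde\sigma_k = d_k^4\sigma_k + d_k^2\,\pa_{u_k}a_k - a_k^2, \qquad\text{equivalently}\qquad [f_k']^2\tilde\sigma_k = \sigma_k + [f_k']^2\bigl(d_k^2\,\pa_{u_k}a_k - a_k^2\bigr).
$$

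The only step that is not pure bookkeeping is recognizing the last term as a Schwarzian. Differentiating $a_k = -f_k''/(2[f_k']^2)$ gives $\pa_{u_k}a_k = -f_k'''/(2[f_k']^2) + [f_k'']^2/[f_k']^3$, hence $d_k^2\,\pa_{u_k}a_k - a_k^2 = -f_k'''/(2[f_k']^3) + \frac34[f_k'']^2/[f_k']^4$, and therefore
$$
[f_k']^2\bigl(d_k^2\,\pa_{u_k}a_k - a_k^2\bigr) = -\frac{f_k'''}{2f_k'} + \frac34\frac{[f_k'']^2}{[f_k']^2} = -\frac12\Bigl(\frac{f_k'''}{f_k'} - \frac32\frac{[f_k'']^2}{[f_k']^2}\Bigr) = -\frac12 S_{u_k}(f_k),
$$
which gives ${f_k'}^2\tilde\sigma_k = \sigma_k - \frac12 S_{u_k}(f_k)$ as asserted. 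To finish I would note that $\sigma_k$ and $f_k$, hence this expression, depend only on $u_k$, so $\tilde\sigma_k$ is a function of $\tilde u_k$ alone, and that $\wt\Gamma$ is again symmetric since $D^{-1}$ and $A$ are diagonal; thus $\wt\Gamma$ solves a system of the form \eqref{delta1}. I expect the main obstacle to be purely organizational: one must track the noncommutative matrix products carefully while using systematically that $E_k$ commutes with every diagonal factor up to a scalar, so that after substitution every term falls into one of the three recognizable shapes above.
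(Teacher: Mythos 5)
Your computation is correct and is exactly the ``straightforward calculation'' that the paper omits: writing $\wt\Gamma = D^{-1}\Gamma D^{-1}+A$, using that $E_k$ absorbs diagonal factors as scalars, and identifying the residual $E_k$-coefficient with $-\tfrac12 S_{u_k}(f_k)$ all check out. Nothing further is needed.
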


This lemma is proved by a straightforward
calculation.\qed

\begin{cor}
A suitable transformation of the
form \eqref{trans1} reduces system \eqref{delta1} to
the form
\begin{equation}\label{main}
\frac{\pa \wt\Gamma}{\pa \tilde u_k} =\wt\Gamma E_{k} \wt\Gamma, \qquad k=1, \dots, n.
\end{equation}
\end{cor}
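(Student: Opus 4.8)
The plan is to read off from Lemma~\ref{lemma22} exactly what must be done. That lemma says that any substitution of the form \eqref{trans1} with nonconstant smooth $f_1,\dots,f_n$ preserves the shape of \eqref{delta1}, replacing $\sigma_k$ by the function $\tilde\sigma_k$ determined by ${f_k'}^2\,\tilde\sigma_k=\sigma_k-\tfrac12 S_{u_k}(f_k)$. Hence to arrive at \eqref{main} it suffices to choose each $f_k$ so that $\tilde\sigma_k\equiv 0$, i.e. to solve the single-variable \emph{Schwarzian equation}
\begin{equation*}
S_{u_k}(f_k)=2\,\sigma_k(u_k),\qquad k=1,\dots,n,
\end{equation*}
one decoupled scalar problem for each index $k$.

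The key step is the classical linearization of the Schwarzian equation. Near a generic point pick two linearly independent solutions $y_1=y_1(u_k)$, $y_2=y_2(u_k)$ of the second-order linear ODE $y''+\sigma_k(u_k)\,y=0$; such solutions exist locally by the standard theory of linear ODEs, the coefficient $\sigma_k$ being as regular as the rotation coefficients of the underlying solution. Set $f_k:=y_1/y_2$. Since the Wronskian $W=y_1'y_2-y_1y_2'$ is constant (its derivative vanishes on the ODE), one has $f_k'=W/y_2^2\neq 0$ wherever $y_2\neq 0$, so $f_k$ is a genuine nonconstant change of variable admissible in \eqref{trans1}; and a one-line computation of $f_k'$, $f_k''$, $f_k'''$ in terms of $y_2,y_2'$ gives $S_{u_k}(f_k)=2\sigma_k(u_k)$ exactly.

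Applying Lemma~\ref{lemma22} with this choice of $f_1,\dots,f_n$, the transformed system is $\partial\wt\Gamma/\partial\tilde u_k=\wt\Gamma E_k\wt\Gamma+\tilde\sigma_k(\tilde u_k)E_k$ with $\tilde\sigma_k=\bigl(\sigma_k-\tfrac12 S_{u_k}(f_k)\bigr)/{f_k'}^2=0$, which is precisely \eqref{main}. There is essentially no serious obstacle here; the only point requiring care is that the reduction is local — one works in a neighbourhood of a generic point, avoiding the zeros of the auxiliary functions $y_2$ — which is consistent with the local nature of the whole construction.
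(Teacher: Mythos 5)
Your proposal is correct and follows exactly the paper's argument: reduce to the decoupled Schwarzian equations $S_{u_k}(f_k)=2\sigma_k(u_k)$ via Lemma~\ref{lemma22}, and solve each by taking $f_k$ to be the ratio of two independent solutions of $y''+\sigma_k(u_k)y=0$, which is precisely the paper's proof together with the remark immediately following it. No issues.
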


\begin{proof}{Proof}
The needed transformation $\tilde u_k =f_k(u_k)$ is determined from the Schwarzian equations
$$
S_{u_k} (f_k)=2\sigma_k(u_k), \qquad k=1, \dots, n.\eqno{\Box}
$$
\end{proof}

Recall that the solution to the general Schwarzian equation $S_u (f(u)) =2 \sigma(u)$
can be represented
as the ratio of two solutions to the linear second-order equation
$$
y''+\sigma(u) y =0.
$$

\begin{remark}
System \eqref{main} was studied in \cite{ekpz}
in the investigation of the so-called multi-flow cold gas
reductions of the nonlocal kinetic equation derived as the thermodynamical
limit of the averaged
multi-phase solutions of the KdV equation by the Whitham method.
\end{remark}

In the next section we shall solve system \eqref{main}.

\section{Basic System}

In this section we shall describe solutions to the basic system
\begin{equation}\label{sym11}
\frac{\pa \Gamma}{\pa u_k} =\Gamma E_{k} \Gamma, \qquad k=1, \dots, n.
\end{equation}
Here
$$
\Gamma=(\gamma_{ij}({\bf u}))_{1\le i, j\le n}
$$
is a symmetric matrix (the tildes used in the previous section
are omitted). The compatibility
conditions
$$
\frac{\pa}{\pa u_l}\,\frac{\pa \Gamma}{\pa u_k} = \frac{\pa}{\pa u_k}\,\frac{\pa \Gamma}{\pa u_l}
$$
for any $k$ and $l$ can be readily verified. So, locally, any solution
to \eqref{sym11} is uniquely
determined by the initial data
$$
\Gamma^0=\Gamma({\bf u}^0).
$$
Here ${\bf u}^0$ is any point in the space of independent variables. Therefore, the space of
solutions to the system \eqref{sym11} has dimension $n(n+1)/2$.

Without loss of generality, one can assume that ${\bf u}^0=0$.
The solution to system \eqref{sym11}
with given initial data at the point ${\bf u}=0$ can be written explicitly.

\begin{prop}
The solution $\Gamma=\Gamma({\bf u})$ to the basic system \eqref{sym11}
with initial data
$$
\Gamma(0) =G,
$$
where $G=(g_{ij})$ is a given symmetric matrix, is determined
by the formula
\begin{equation}\label{otvet1}
\Gamma=G (1-U G)^{-1},
\end{equation}
where $1$ is the $n\times n$ identity matrix and $U= \op{diag}(u_1, \dots, u_n)$.
\end{prop}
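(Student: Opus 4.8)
The plan is to verify directly that the matrix $\Gamma = G(1-UG)^{-1}$ solves the basic system with the prescribed initial data, using uniqueness (already established from the compatibility of \eqref{sym11}) to conclude that this is \emph{the} solution. Setting $u = 0$ gives $U = 0$, hence $\Gamma(0) = G(1-0)^{-1} = G$, so the initial condition holds immediately. It remains to check the $n$ differential equations $\partial\Gamma/\partial u_k = \Gamma E_k \Gamma$.

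The computation is a routine differentiation of the matrix inverse. Writing $M := 1 - UG$, one has $\Gamma = G M^{-1}$. Since $\partial U/\partial u_k = E_k$, the product rule gives $\partial M/\partial u_k = -E_k G$, and the standard identity $\partial (M^{-1})/\partial u_k = -M^{-1}(\partial M/\partial u_k)M^{-1}$ yields
\begin{equation*}
\frac{\partial \Gamma}{\partial u_k} = G\,\frac{\partial (M^{-1})}{\partial u_k} = G M^{-1} E_k G M^{-1} = (G M^{-1}) E_k (G M^{-1}) = \Gamma E_k \Gamma,
\end{equation*}
which is exactly \eqref{sym11}. (One should note in passing that $M = 1 - UG$ is invertible in a neighborhood of $u = 0$, since $M(0) = 1$, so the formula makes sense locally.)

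The only remaining point is that $\Gamma$ as defined is actually \emph{symmetric}, since the basic system \eqref{sym11} was set up for symmetric matrices and the solution is asserted to be a symmetric matrix. This is the one step requiring a small argument rather than blind differentiation: $G M^{-1} = G(1-UG)^{-1}$ is not obviously symmetric because $G$ and $U$ do not commute. I would argue that $\Gamma^{\top} = (1-GU)^{-1}G$ and then show $G(1-UG)^{-1} = (1-GU)^{-1}G$, which follows from the elementary identity $(1-GU)G = G(1-UG)$ upon multiplying by the two inverses on the appropriate sides. Thus $\Gamma^{\top} = \Gamma$, and by the uniqueness of solutions to \eqref{sym11} with given initial data, \eqref{otvet1} is the desired solution. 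The main (very mild) obstacle is this symmetry verification; the differential equation itself is immediate from the inverse-derivative formula.
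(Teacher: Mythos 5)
Your proposal is correct and follows essentially the same route as the paper: the initial condition is immediate, the differential equations follow from the standard formula for differentiating a matrix inverse, and the symmetry of $G(1-UG)^{-1}$ is established via the identity $(1-GU)G=G(1-UG)$, exactly as in the paper's own argument.
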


\begin{proof}{Proof}
The symmetry of the matrix \eqref{otvet1} is tantamount to the
relation
$$
G (1-U G)^{-1} = (1- GU)^{-1} G.
$$
To prove this relation, we multiply it by $1-GU$ on the left and by $1-UG$ on the right
and arrive at the
obvious identity $(1- GU) G= G(1-U G)=G-GU G$. Clearly, $\Gamma(0)=G$.
The proof of the proposition is completed by applying the well-known
rule
$$
\frac{\pa \Gamma}{\pa u_k}=- G (1-U G)^{-1} \frac{\pa (1-U G)}{\pa u_k} (1-U G)^{-1}=G (1-U
G)^{-1} E_{k} G (1-U G)^{-1}=\Gamma E_{k}\Gamma
$$
for differentiating inverse matrices.\qed
\end{proof}

\begin{exam}
For a matrix $g_{ij}=\omega_i \omega_j$ of rank $1$,
one obtains the following solution to the
basic system:
\begin{equation}\label{exam1}
\gamma_{ij} =\frac{\omega_i \omega_j}{1-\sum_{k=1}^n \omega_k^2 u_k}\,.
\end{equation}
\end{exam}

Now, let us describe a subclass of
transformations \eqref{trans1} leaving invariant the basic
system \eqref{sym11}.

\begin{prop}
The basic system \eqref{sym11} is invariant with respect
to transformations \eqref{trans1} if and only if
$f_k(u_k)$ for every $k=1, \dots, n$ is a fractional linear transformation
$$
f_k(u_k)=\frac{a_k u_k +b_k}{c_k u_k + d_k}, \qquad a_kd_k-b_kc_k=1.
$$
\end{prop}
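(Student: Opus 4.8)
The plan is to use Lemma~\ref{lemma22} together with the explicit solution formula \eqref{otvet1}. By Lemma~\ref{lemma22}, a transformation \eqref{trans1} with functions $f_k$ sends solutions of \eqref{delta1} with functions $\sigma_k$ to solutions of \eqref{delta1} with functions $\tilde\sigma_k$ satisfying ${f_k'}^2\tilde\sigma_k=\sigma_k-\tfrac12 S_{u_k}(f_k)$. The basic system \eqref{sym11} is the special case $\sigma_k\equiv 0$, so \eqref{trans1} maps the basic system to the basic system (i.e.\ preserves the condition $\tilde\sigma_k\equiv 0$) precisely when $S_{u_k}(f_k)\equiv 0$ for every $k$. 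Hence the whole problem reduces to the classical fact that the Schwarzian derivative $S_u(f)$ vanishes identically if and only if $f$ is a M\"obius (fractional linear) transformation.

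First I would record the ``only if'' direction. The equation $S_u(f)=0$, written out as $f'''/f'=\tfrac32(f''/f')^2$, is equivalent to $(f''/f')'=-\tfrac12(f''/f')^2$, which upon setting $w=f''/f'$ becomes the Riccati-type ODE $w'=-\tfrac12 w^2$; its general solution is $w=2/(u-u_0)$ or $w\equiv 0$, and integrating $f''/f'=w$ twice yields $f(u)=(au+b)/(cu+d)$ with $ad-bc\neq 0$ (the nonconstancy of $f$ forces $ad-bc\neq 0$, and one may normalize $ad-bc=1$ by scaling the numerator and denominator). Alternatively, and more in the spirit of the remark just before this proposition, one invokes the stated representation of the solution of a Schwarzian equation as a ratio $y_1/y_2$ of two solutions of $y''+\sigma y=0$: for $\sigma\equiv 0$ the solutions are affine functions $y=\alpha u+\beta$, so $f$ is a ratio of two affine functions, i.e.\ fractional linear. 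For the ``if'' direction one simply checks by a short direct computation that $S_u$ of a fractional linear function vanishes --- or notes that the chain rule identity $S_u(g\circ f)=\big(S_v(g)\circ f\big)(f')^2+S_u(f)$ together with $S_u$ vanishing on affine maps and on $u\mapsto 1/u$ gives it for all of $\mathrm{PGL}_2$.

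Finally I would assemble the equivalence: \eqref{sym11} is invariant under \eqref{trans1} $\iff$ each $\tilde\sigma_k\equiv 0$ $\iff$ each $S_{u_k}(f_k)\equiv 0$ $\iff$ each $f_k$ is fractional linear with $a_kd_k-b_kc_k=1$. One subtle point worth a sentence: Lemma~\ref{lemma22} only asserts that \eqref{trans1} preserves the \emph{form} of \eqref{delta1}, so ``invariance of the basic system'' should be read as ``maps every solution of the basic system to a solution of the basic system'', which is exactly the condition $\tilde\sigma_k\equiv 0$; conversely if $S_{u_k}(f_k)\not\equiv 0$ for some $k$, then starting from $\Gamma\equiv 0$ (the solution with $G=0$) the transformed matrix $\widetilde\Gamma$ has nonzero diagonal entries and solves \eqref{delta1} with $\tilde\sigma_k=-\tfrac12 S_{u_k}(f_k)/{f_k'}^2\not\equiv 0$, hence does not solve the basic system --- so the condition is genuinely necessary. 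I do not anticipate a real obstacle here; the only care needed is the bookkeeping of the normalization $a_kd_k-b_kc_k=1$ and the observation that this normalization is available without loss of generality.
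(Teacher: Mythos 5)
Your proposal is correct and follows essentially the same route as the paper: the paper likewise invokes Lemma~\ref{lemma22} to reduce invariance of \eqref{sym11} to the vanishing of the Schwarzian $S_{u_k}(f_k)$, and then appeals to the well-known fact that the homogeneous Schwarzian equation is solved exactly by fractional linear functions. Your additional details (the Riccati reduction for $S_u(f)=0$ and the explicit check that non-M\"obius $f_k$ genuinely breaks invariance on the solution $\Gamma\equiv 0$) merely flesh out what the paper leaves as a one-line remark.
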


\begin{proof}{Proof}
It is well known that the general solution to the homogeneous Schwarzian equation
$$
\frac{f'''}{f'} -\frac32 \frac{{f''}^2}{{f'}^2}=0
$$
is given by a fractional linear function. \qed
\end{proof}

\begin{cor} The basic system \eqref{sym11} is invariant
with respect to the transformations
\begin{equation}\label{trans3}
\begin{aligned}
\tilde u_k&=\frac{a_k u_k +b_k}{c_k u_k+d_k}, \qquad \begin{pmatrix} a_k & b_k\\ c_k &
d_k\end{pmatrix}\in SL_2(\mathbb R),\; k=1, \dots, n,\\
\tilde \gamma_{ij}&=(c_iu_i+d_i) (c_j u_j+d_j) \gamma_{ij} + c_i (c_iu_i+d_i) \delta_{ij}, \qquad i,
j=1, \dots, n.
\end{aligned}
\end{equation}
\end{cor}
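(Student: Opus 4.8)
The plan is to read off this Corollary as a direct specialization of Lemma~\ref{lemma22} together with the Proposition immediately preceding it. That Proposition asserts that a transformation of the form \eqref{trans1} preserves the basic system \eqref{sym11} exactly when each $f_k$ is fractional linear; so, fixing $f_k(u_k)=(a_ku_k+b_k)/(c_ku_k+d_k)$ with $a_kd_k-b_kc_k=1$, invariance of the \emph{form} of \eqref{sym11} is automatic (the Schwarzian $S_{u_k}(f_k)$ vanishes, hence the constant $\tilde\sigma_k$ produced by Lemma~\ref{lemma22} is zero whenever $\sigma_k=0$). It therefore remains only to rewrite the transformation rule for the entries $\gamma_{ij}$ supplied by \eqref{trans1} in terms of $a_k,b_k,c_k,d_k$.

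For that I would simply differentiate. From the normalization $a_kd_k-b_kc_k=1$ one gets $f_k'(u_k)=(c_ku_k+d_k)^{-2}$ and $f_k''(u_k)=-2c_k(c_ku_k+d_k)^{-3}$. Substituting these into \eqref{trans1}, the prefactor $[f_i'(u_i)f_j'(u_j)]^{-1/2}$ becomes $(c_iu_i+d_i)(c_ju_j+d_j)$, while the diagonal correction term is
$$-\frac{f_i''(u_i)}{2[f_i'(u_i)]^2}\,\delta_{ij}=c_i(c_iu_i+d_i)\,\delta_{ij}.$$
Adding the two contributions reproduces precisely the second line of \eqref{trans3}, and combining with the change of variables $\tilde u_k=f_k(u_k)$ gives the stated transformation.

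Every step here is an elementary substitution, so there is no real obstacle. The only point that deserves a word of care is the branch of the square root $\sqrt{f_i'(u_i)f_j'(u_j)}$: since $f_i'f_j'=[(c_iu_i+d_i)(c_ju_j+d_j)]^{-2}$ is a perfect square, the chosen branch is $\pm[(c_iu_i+d_i)(c_ju_j+d_j)]$, and the sign ambiguity is harmless because replacing the $SL_2(\mathbb R)$ matrix by its negative leaves the M\"obius action on $u_k$ unchanged while flipping all the factors $c_iu_i+d_i$ simultaneously; one fixes the branch so that \eqref{trans3} holds on the identity component and extends by continuity. If desired, one may additionally verify --- again by a routine computation, or by appealing to the semigroup property already built into Lemma~\ref{lemma22} --- that composing two transformations \eqref{trans3} corresponds to the product in $SL_2(\mathbb R)^n$, so that \eqref{trans3} indeed defines a group action preserving \eqref{sym11}.
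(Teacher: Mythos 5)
Your proof is correct and follows exactly the route the paper intends: the Corollary is an immediate specialization of Lemma~\ref{lemma22} and the preceding Proposition (vanishing Schwarzian forces $\tilde\sigma_k=0$), and the explicit form \eqref{trans3} comes from substituting $f_k'=(c_ku_k+d_k)^{-2}$ and $f_k''=-2c_k(c_ku_k+d_k)^{-3}$ into \eqref{trans1}. The paper leaves this substitution to the reader; your computation of both the off-diagonal prefactor and the diagonal correction term is accurate, and your remarks on the sign of the square root and the group property are harmless additions.
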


The matrix version of transformation \eqref{trans3} is
\begin{equation}\label{trans32}
\wt{U}=(AU + B)(CU+D)^{-1}, \qquad \wt\Gamma=(CU+D)\Gamma (CU+D)+C(CU+D),
\end{equation}
where $A=\op{diag}(a_1, \dots, a_n)$, $B=\op{diag} (b_1, \dots, b_n)$, $C=\op{diag}(c_1, \dots,
c_n)$, $D=\op{diag}(d_1,\dots, d_n)$, and $AD - BC=1$.

\begin{exam} The substitution
$$
\tilde u_k = \omega_k^2 u_k, \qquad \tilde \gamma_{ij} =\frac{\gamma_{ij}}{\omega_i \omega_j}
$$
reduces solution \eqref{exam1} to the standard form
$$
\tilde\gamma_{ij} =\frac1{1-\sum_{k=1}^n \tilde u_k}, \qquad i, j=1, \dots, n.
$$
\end{exam}

The action of the $[SL_2(\mathbb R)]^n$ transformations \eqref{trans3} on
solutions \eqref{otvet1}
is given by the following analogue of Siegel modular transformations.

\begin{prop} Let the symmetric matrix $G$ satisfy the condition
$\det (A+BG)\neq 0$. Then transformation \eqref{trans3}
transforms the solution $\Gamma({\bf
u})$ with initial data $\Gamma(0)=G$ into
$$
\wt\Gamma = \wt{G} (1- \wt{U} \wt{G})^{-1}
$$
with
\begin{equation}\label{trans31}
\wt{G} = (C+DG) (A + BG)^{-1}.
\end{equation}
\end{prop}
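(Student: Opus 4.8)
The plan is to verify directly that the claimed matrix $\wt G=(C+DG)(A+BG)^{-1}$, once plugged into the formula $\wt\Gamma=\wt G(1-\wt U\wt G)^{-1}$, coincides with the right-hand side of the transformation rule \eqref{trans32}, i.e.\ with $(CU+D)\Gamma(CU+D)+C(CU+D)$ where $\Gamma=G(1-UG)^{-1}$. By the uniqueness statement preceding the first Proposition of Section 3, it suffices to check two things: that $\wt\Gamma$ so defined solves the basic system \eqref{sym11} in the variables $\tilde u_k$, and that it takes the value $\wt G$ at $\wt U=0$ (equivalently at $U=-BD^{-1}$, or more carefully at the point where $AU+B=0$). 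The first of these is already guaranteed by the previous Corollary, which asserts that \eqref{trans3} maps solutions of \eqref{sym11} to solutions of \eqref{sym11}; so the real content is purely the algebraic identity computing the new initial datum.

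First I would record the useful factorization of the congruence operation. Write $\Gamma=G(1-UG)^{-1}=(1-GU)^{-1}G$ (the symmetry established in the Proposition). Then $(CU+D)\Gamma(CU+D)+C(CU+D)=[(CU+D)\Gamma+C](CU+D)$, and I want to rewrite the bracket. Using $\Gamma=(1-GU)^{-1}G$, one has $(CU+D)\Gamma+C=\bigl[(CU+D)G+C(1-GU)\bigr](1-GU)^{-1}G\cdot(\text{?})$ — more cleanly, I would instead substitute $\Gamma = -U^{-1}+U^{-1}(1-GU)^{-1}$ only if $U$ is invertible, so to avoid invertibility hypotheses I prefer the bracket manipulation $(CU+D)\Gamma+C = (CU+D)G(1-UG)^{-1}+C = \bigl[(CU+D)G+C(1-UG)\bigr](1-UG)^{-1} = (DG+C)(1-UG)^{-1}$, where the $CUG$ terms cancel. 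Hence
\begin{equation}\label{pf-key}
\wt\Gamma=(DG+C)(1-UG)^{-1}(CU+D).
\end{equation}
This is the clean closed form I would aim for; it already exhibits $\wt\Gamma$ as a congruence-type object built from $DG+C=C+DG$, which is exactly the numerator of $\wt G$ in \eqref{trans31}.

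Next I would massage \eqref{pf-key} into the shape $\wt G(1-\wt U\wt G)^{-1}$. Set $N=C+DG$ and $M=A+BG$, so $\wt G=NM^{-1}$. Using $\wt U=(AU+B)(CU+D)^{-1}$ one computes $1-\wt U\wt G=1-(AU+B)(CU+D)^{-1}NM^{-1}$. I would multiply \eqref{pf-key} on the right by $(1-\wt U\wt G)$ and try to show the product equals $\wt G$. The key sub-identity needed is a relation of the form $(1-UG)M=(CU+D)^{-1}(\text{something})$ combined with $AD-BC=1$; concretely, since $A,B,C,D,U$ are all diagonal they commute with each other (though not with $G$), and $AD-BC=1$ gives $(CU+D)(?)=\dots$. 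The cleanest route: verify $M(CU+D)-N(AU+B)=(A+BG)(CU+D)-(C+DG)(AU+B)=(AD-BC)+(AD-BC)GU = (1)(1+GU)$ — wait, recomputing the cross terms, the pure-$G$-free part is $AD-CB=1$ and the $U^2$ part cancels ($BCU\cdot$ vs $DAU$? no, $ACU$ vs $CAU$ cancel), and the $G$-linear terms give $BG\cdot CU + A\cdot D\cdot$... I would carefully collect and expect the answer $M(CU+D)-N(AU+B)=1-GU$ or a close variant. Granting such an identity, one gets $(1-UG)^{-1}(CU+D)=$ [expression in $M,N,\wt U$] and \eqref{pf-key} collapses to $NM^{-1}(1-\wt U\,NM^{-1})^{-1}=\wt G(1-\wt U\wt G)^{-1}$, as desired; the hypothesis $\det(A+BG)\neq0$ is precisely what makes $M^{-1}$ and hence $\wt G$ well defined.

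The main obstacle is bookkeeping: $G$ is a general symmetric matrix while $U,A,B,C,D$ are diagonal, so one must track noncommutativity carefully and cannot simply treat everything as scalars. The crucial cross-term identity $M(CU+D)-N(AU+B)=1-GU$ (or whatever its correct form turns out to be) must be established by a careful expansion using only $AD-BC=1$ and the mutual commutativity of the diagonal matrices; everything else is formal inversion. An alternative, perhaps safer, route that sidesteps all this: invoke the group structure. The transformations \eqref{trans3} form a group (composition of fractional-linear maps componentwise), and the map $G\mapsto(C+DG)(A+BG)^{-1}$ is the corresponding projective action; one checks it is a left action matching composition, checks it on generators (the two types of generators: $\tilde u_k=u_k+b_k$ giving $\wt G=G$ vs.\ $\tilde u_k=-1/u_k$ giving $\wt G=G^{-1}$-type relabelings — actually the affine-plus-inversion generators), and then the general case follows by composing. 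I would present the direct computation as the primary proof since it is self-contained, reserving the group-theoretic remark as a sanity check.
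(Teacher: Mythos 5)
Your overall strategy is sound and is essentially the paper's: invoke the Corollary (the transformed $\wt\Gamma$ again solves \eqref{sym11}) together with the uniqueness of solutions with prescribed initial data, so that only an algebraic evaluation remains. Your intermediate formula $\wt\Gamma=(C+DG)(1-UG)^{-1}(CU+D)$ is correct and is the right pivot. The one step you leave unresolved is the ``key sub-identity,'' and the version you wrote down, $M(CU+D)-N(AU+B)$ with $M=A+BG$, $N=C+DG$, does \emph{not} simplify: expanding it produces the terms $BGCU-DGAU$, in which $G$ sits to the left of the diagonal factors and cannot be commuted past them, which is exactly where your own expansion stalls. The identity you need has the factors in the opposite order:
\begin{equation*}
(CU+D)(A+BG)-(AU+B)(C+DG)=(DA-BC)+(CB-AD)UG=1-UG,
\end{equation*}
using only $AD-BC=1$ and the mutual commutativity of the diagonal matrices. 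With this, the verification closes in two lines: since $(CU+D)\wt U=(CU+D)(AU+B)(CU+D)^{-1}=AU+B$, one gets
\begin{equation*}
\wt\Gamma-\wt\Gamma\,\wt U\,\wt G=(C+DG)(1-UG)^{-1}\bigl[(CU+D)(A+BG)-(AU+B)(C+DG)\bigr](A+BG)^{-1}=\wt G,
\end{equation*}
i.e.\ $\wt\Gamma(1-\wt U\wt G)=\wt G$, which is the claim.

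Two further remarks. First, having set up the uniqueness reduction, you did not actually need the full identity for all $U$: evaluating your intermediate formula at the point $U_0=-A^{-1}B$ where $\wt U=0$ gives $1-U_0G=A^{-1}(A+BG)$ and $CU_0+D=A^{-1}$, hence $\wt\Gamma=\wt G$ there, and uniqueness finishes the proof (this is, in substance, what the paper does, except that the paper writes $\wt\Gamma$ as a function of $\wt U$ to avoid the caveat that some $a_k$ may vanish). Second, your remark that the hypothesis $\det(A+BG)\neq 0$ is exactly what makes $\wt G$ well defined is correct and worth keeping.
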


\begin{proof}{Proof}
An easy calculation employing \eqref{trans32} yields
$$
\wt\Gamma= (-C\wt{U} +A)^{-1} G [A+BG -\wt{U} (C+DG)]^{-1}+C (-C\wt{U} +A)^{-1}.
$$
Computing the initial data of this solution at $\tilde {\bf u}=0$,
we arrive at $\wt\Gamma(0)=\wt{G}$
with the matrix $\tilde G$ given by \eqref{trans31}. \qed
\end{proof}

\begin{defn}
Two solutions $\Gamma$ and $\tilde \Gamma$ to the basic system
are called \emph{equivalent} if they
are related by a symmetry transformation of the form \eqref{trans32}.
Two symmetric matrices $G$ and $\wt{G}$
related by transformation \eqref{trans31} will also be called equivalent.
\end{defn}

Note that the useful identity
\begin{equation}\label{trans33}
(C+DG) (A + BG)^{-1}=(A+GB)^{-1} (C+GD)
\end{equation}
is equivalent to the symmetry of the matrix $G$.

\section[Solutions of the Associativity Equations]{From Solutions of the Basic System
to Linearly Degenerate Solutions of the Associativity Equations}

In this section we address the problem of selecting
those solutions to the basic system
\eqref{sym11} that come from a linearly degenerate solution to the associativity equations.

Given a symmetric matrix-valued function
$\Gamma({\bf u})$ satisfying \eqref{sym11}, we
look for a substitution of the form \eqref{trans1} such that
the transformed matrix $\wt\Gamma$
satisfies also the last equation \eqref{egor11} of the Darboux--Egorov system, that is,
\begin{equation}\label{eg1}
\sum_{k=1}^n \frac{\pa \wt\Gamma}{\pa\tilde u_k}\quad\text{is a diagonal matrix}.
\end{equation}
Recall that the equations
$$
\frac{\pa \tilde \gamma_{ij}}{\pa \tilde u_k}=\tilde \gamma_{ik}\tilde \gamma_{kj}\quad
\text{for distinct
$i$, $j$, and $k$,}
$$
which are the first part of this system (Eqs.~(1.4)),
follow from the basic system by Lemma \ref{lemma22}.

Applying Lemma \ref{lemma22}, we arrive at the following simple statement.

\begin{prop}
Let $\Gamma({\bf u})$ be a solution to the basic system \eqref{sym11}.
Suppose that the functions
$f_1(u_1),\dots,f_n(u_n)$ are chosen in such a way that the transformed matrix \eqref{trans1}
satisfies \eqref{eg1}. Then the off-diagonal entries of the transformed matrix $\wt\Gamma$ are
the rotation coefficients of some Egorov metric.
\end{prop}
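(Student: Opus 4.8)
The plan is to verify that the off-diagonal block of $\wt\Gamma$ solves the complete Darboux--Egorov system \eqref{egor1}--\eqref{egor11} and then to apply the standard reconstruction of an Egorov metric from such a solution. First I would note that the basic system \eqref{sym11} is exactly system \eqref{delta1} with all $\sigma_k\equiv 0$; hence, by Lemma~\ref{lemma22}, the transformed matrix $\wt\Gamma$ obtained from $\Gamma$ via \eqref{trans1} again satisfies \eqref{delta1}, now with $\tilde\sigma_k(\tilde u_k)=-\frac{1}{2{f_k'}^2}\,S_{u_k}(f_k)$. Because the correction term $-\frac{f_i''}{2{f_i'}^2}\delta_{ij}$ in \eqref{trans1} is diagonal, $\wt\Gamma$ inherits the symmetry $\tilde\gamma_{ij}=\tilde\gamma_{ji}$ of $\Gamma$; and because the source term $\tilde\sigma_k E_k$ in \eqref{delta1} is also diagonal, the off-diagonal part of \eqref{delta1} reads $\pa\tilde\gamma_{ij}/\pa\tilde u_k=\tilde\gamma_{ik}\tilde\gamma_{kj}$ for distinct $i,j,k$, i.e.\ equations \eqref{egor1}. (This is the observation already recorded just before the statement.)

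The hypothesis then supplies the missing equation. Condition \eqref{eg1} asserts that $\sum_k\pa\wt\Gamma/\pa\tilde u_k$ is a diagonal matrix; reading off its $(i,j)$-entry with $i\ne j$ gives $\sum_k\pa\tilde\gamma_{ij}/\pa\tilde u_k=0$, which is precisely \eqref{egor11}. So the off-diagonal entries $\tilde\gamma_{ij}$, $i\ne j$, form a symmetric solution of the full Darboux--Egorov system \eqref{egor1}--\eqref{egor11}.

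To finish I would reconstruct the metric as recalled in Section~1: the Darboux--Egorov equations are exactly the compatibility conditions of the linear system \eqref{egor2}--\eqref{egor3}, so one may choose a nontrivial solution $\psi=(\psi_1,\dots,\psi_n)$ (with all components nonvanishing near a generic point) and put $h_i:=\psi_i$. By \eqref{egor2} the diagonal metric $ds^2=\sum_i h_i^2\,d\tilde u_i^2$ has rotation coefficients $h_j^{-1}\pa h_i/\pa\tilde u_j=\tilde\gamma_{ij}$ for $i\ne j$, which agree with the symmetric matrix we started from; thus the metric is of Egorov type. It is moreover flat, since \eqref{egor1}, \eqref{egor11} and the symmetry $\tilde\gamma_{ij}=\tilde\gamma_{ji}$ together yield the remaining flatness relation $\pa\tilde\gamma_{ij}/\pa\tilde u_i+\pa\tilde\gamma_{ji}/\pa\tilde u_j+\sum_{k\ne i,j}\tilde\gamma_{ki}\tilde\gamma_{kj}=0$ for a diagonal metric (equivalently, by the result recalled in Section~1, every solution of the Darboux--Egorov system comes from a semisimple WDVV solution). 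This proves the proposition.

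There is no genuine obstacle here — the statement is, as the text says, a \emph{simple} consequence of Lemma~\ref{lemma22}. The only point that requires care is the bookkeeping in the first step: one must check that neither the diagonal correction term in \eqref{trans1} nor the diagonal source $\tilde\sigma_k E_k$ in \eqref{delta1} affects the off-diagonal block, so that \eqref{egor1} genuinely survives the transformation, and that \eqref{eg1} contributes exactly \eqref{egor11} on that block and imposes nothing stronger there.
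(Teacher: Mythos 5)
Your proof is correct and follows essentially the same route as the paper: Lemma~\ref{lemma22} gives the off-diagonal Darboux--Egorov equations \eqref{egor1} for $\wt\Gamma$ (since the source term $\tilde\sigma_k E_k$ is diagonal), condition \eqref{eg1} supplies \eqref{egor11}, and the standard reconstruction via \eqref{egor2}--\eqref{egor3} produces the Egorov metric. The paper treats this as an immediate consequence of Lemma~\ref{lemma22} and the remarks preceding the proposition; your write-up merely makes the same bookkeeping explicit.
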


We introduce the diagonal matrices
\begin{alignat}2
S&=\op{diag} (s_1, \dots, s_n),&\qquad s_i&=\frac1{f_i'},\\
S'&=\op{diag} (s_1', \dots, s_n'),&\qquad s_i'&=\frac{ds_i}{du_i}=-\frac{f_i''}{[f_i']^2}\,.
\end{alignat}
Here and in the sequel we use the short notation
$$
f_i'=f_i'(u_i), \quad f_i''=f_i''(u_i),\quad\text{etc.}
$$
In this notation the transformation law \eqref{trans1} reads
$$
\wt\Gamma=S^{1/2} \Gamma S^{1/2} +\tfrac12 S'.
$$
Thus, condition \eqref{eg1} can be represented in the form
\begin{equation}\label{eg3}
\Gamma S \Gamma +\tfrac12 S'\Gamma +\tfrac12 \Gamma S' +P=0
\end{equation}
for some diagonal matrix $P$.

\begin{defn} A solution $\Gamma$ is called \emph{reducible} if,
for some $i$,
$$
\gamma_{ij}\equiv 0\quad\text{for any }\,j\neq i.
$$
Otherwise it is called \emph{irreducible}.
\end{defn}

A reducible solution is expressed in terms of functions depending
on a smaller number of variables.

\begin{theorem} For an irreducible solution
$$
\Gamma=G(1-UG)^{-1}=(1-GU)^{-1}G,
$$
a transformation of the form \eqref{trans1} satisfying \eqref{eg1}
exists if and only if the matrix $G$ satisfies
the quadratic equation
\begin{equation}\label{quad1}
GRG+QG+GQ+P=0
\end{equation}
for some constant diagonal matrices
$$
P=\op{diag}(p_1, \dots, p_n), \quad Q=\op{diag}(q_1, \dots, q_n), \quad R=\op{diag}(r_1, \dots,
r_n).
$$
The transformation in question is determined by
$$
\frac{d \tilde u_i}{d u_i} =\frac1{p_i u_i^2 + 2 q_i u_i + r_i}, \qquad i=1, \dots, n.
$$
\end{theorem}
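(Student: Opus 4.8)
The plan is to start from the reformulation \eqref{eg3} of condition \eqref{eg1} established just above: a substitution \eqref{trans1} with $S=\op{diag}(1/f_i')$ realizes \eqref{eg1} exactly when $\Gamma S\Gamma+\tfrac12 S'\Gamma+\tfrac12\Gamma S'$ is a diagonal matrix, i.e.\ when $\Gamma S\Gamma+\tfrac12 S'\Gamma+\tfrac12\Gamma S'+P=0$ for some diagonal $P=P(\mathbf u)$. The first step is to transport this identity from $\Gamma$ to the constant matrix $G$. Since $\Gamma=(1-GU)^{-1}G=G(1-UG)^{-1}$, one has $(1-GU)\Gamma=G$ and $\Gamma(1-UG)=G$; multiplying \eqref{eg3} on the left by $1-GU$ and on the right by $1-UG$ (which equals $(1-GU)^{\mathsf T}$ because $G=G^{\mathsf T}$) — an invertible, symmetry-preserving operation near $\mathbf u=0$ — and expanding, using that $U,S,S',P$ are diagonal hence mutually commuting, all cross terms collapse and I obtain the equivalent relation
\[ G\bigl(S-US'+U^2P\bigr)G+\bigl(\tfrac12 S'-UP\bigr)G+G\bigl(\tfrac12 S'-UP\bigr)+P=0, \]
abbreviated $G\widehat R\,G+\widehat Q\,G+G\widehat Q+P=0$ with diagonal $\widehat R=S-US'+U^2P$ and $\widehat Q=\tfrac12 S'-UP$. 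This is a routine computation; the only care is in bookkeeping the diagonal factors.

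For the ``only if'' direction I assume such a transformation exists, so the relation $G\widehat R\,G+\widehat Q\,G+G\widehat Q+P=0$ holds for all $\mathbf u$ near $0$, and I differentiate it in $u_k$. Writing $\widehat R_{u_k}=\alpha_k E_k$, $\widehat Q_{u_k}=\beta_k E_k$, $P_{u_k}=\gamma_k E_k$ (derivatives of diagonal matrices being supported on the $(k,k)$ slot, with scalar coefficients $\alpha_k,\beta_k,\gamma_k$, and $E_k$ as in \eqref{unity}), differentiation gives
\[ \alpha_k\,GE_kG+\beta_k\,(E_kG+GE_k)+\gamma_k\,E_k=0. \]
Restricted to the block of indices $\neq k$, the matrix $GE_kG$ becomes the rank-one matrix $w_kw_k^{\mathsf T}$, where $w_k=(g_{ik})_{i\neq k}$ is the off-diagonal part of the $k$-th column of $G$, whereas $E_kG+GE_k$ and $E_k$ vanish there; hence $\alpha_k w_kw_k^{\mathsf T}=0$ on that block. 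This is where irreducibility is essential: if $w_k=0$ for some $k$, then $G$ splits as a direct sum with the index $k$ decoupled, and hence so does $\Gamma$, i.e.\ $\Gamma$ is reducible; thus irreducibility forces $w_k\neq0$, whence $\alpha_k=0$. Plugging this back, comparison of the off-diagonal entries in row/column $k$ (using $w_k\neq0$ again) yields $\beta_k=0$, and then the $(k,k)$ entry yields $\gamma_k=0$. So $\widehat R,\widehat Q,P$ are constant — call them $R,Q,P$ — and $G$ satisfies the algebraic Riccati equation \eqref{quad1}. Finally, constancy of $\widehat Q=\tfrac12 S'-UP$ forces $s_i:=1/f_i'$ to obey $s_i'=2q_i+2p_iu_i$, so $s_i=p_iu_i^2+2q_iu_i+c_i$; substituting this into $\widehat R=S-US'+U^2P$ identifies $c_i=r_i$, giving $s_i=p_iu_i^2+2q_iu_i+r_i$, i.e.\ $d\tilde u_i/du_i=1/(p_iu_i^2+2q_iu_i+r_i)$.

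For the converse I would simply reverse the argument. Given a symmetric $G$ with $GRG+QG+GQ+P=0$ for constant diagonal $R,Q,P$, set $f_i'(u_i)=1/(p_iu_i^2+2q_iu_i+r_i)$, so that $S=\op{diag}(p_iu_i^2+2q_iu_i+r_i)$ and $S'=\op{diag}(2p_iu_i+2q_i)$; a one-line check then gives $\tfrac12 S'-UP=Q$ and $S-US'+U^2P=R$, so the abbreviated identity of the first step reads $GRG+QG+GQ+P=0$, which is precisely the hypothesis. Multiplying it back by $(1-GU)^{-1}$ on the left and $(1-UG)^{-1}$ on the right returns \eqref{eg3} with this (constant) $P$, hence \eqref{eg1}. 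I expect the only real obstacle to lie in the ``only if'' part — not in the matrix algebra, which is mechanical, but in the step upgrading ``the identity holds for all $\mathbf u$'' to ``$\widehat R,\widehat Q,P$ are constant'': that requires differentiating, splitting the resulting rank-controlled relation into its three independent coordinate pieces (the $(k,k)$ entry, the off-diagonal entries in row/column $k$, and the complementary block), and using irreducibility through $w_k\neq0$ to annihilate the leading coefficient.
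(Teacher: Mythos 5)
Your overall route is the paper's argument with the two main operations permuted: the paper differentiates the diagonality condition \eqref{eg3} first and conjugates by $1-GU$, $1-UG$ only at the very end, whereas you conjugate first and then differentiate. Your computation of $\widehat R=S-US'+U^2P$ and $\widehat Q=\tfrac12 S'-UP$ is correct, your use of irreducibility (through $w_k\neq 0$, which does correctly force a block splitting of $\Gamma$ when it fails) is exactly the right mechanism, and the ``if'' direction is fine.

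There is, however, one genuine gap in the ``only if'' direction. You set $P_{u_k}=\gamma_k E_k$ and justify it by ``derivatives of diagonal matrices being supported on the $(k,k)$ slot.'' That is valid for $S$, $S'$ and $U$, whose $i$-th entries depend only on $u_i$ by construction, but not for $P$: the matrix $P$ is merely defined as minus the diagonal part of $\Gamma S\Gamma+\tfrac12 S'\Gamma+\tfrac12\Gamma S'$, so $p_i=-\sum_j\gamma_{ij}^2 s_j-s_i'\gamma_{ii}$ depends a priori on all of $\mathbf u$; showing that $p_i$ does not depend on $u_k$ for $k\neq i$ is precisely part of what must be proved here, so assuming $P_{u_k}=\gamma_k E_k$ is circular. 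The error propagates: with a general diagonal $D_k=P_{u_k}$ the differentiated identity acquires the extra terms $-UD_kG-GUD_k+GU^2D_kG+D_k$ in place of $\gamma_kE_k$, and these do not vanish on the complementary block $\{i,j\neq k\}$, so your restriction step no longer isolates $\alpha_k w_k w_k^{\mathsf T}$. The clean fix is to perform this one step at the level of $\Gamma$, as the paper does: differentiating \eqref{eg3} directly and simplifying with \eqref{sym11} yields
\begin{equation*}
\Bigl(\tfrac12 s_k''-p_k\Bigr)\bigl(\Gamma E_k+E_k\Gamma\bigr)+P_{u_k}=0,
\end{equation*}
and since $P_{u_k}$ is diagonal it contributes nothing to the off-diagonal entries $\gamma_{ik}$, $i\neq k$, of the $k$-th row and column; irreducibility then gives $p_k=\tfrac12 s_k''$, whence $P_{u_k}=0$. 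After that, your identification $s_i=p_iu_i^2+2q_iu_i+r_i$ and the final conjugation by $1-GU$ and $1-UG$ go through exactly as you wrote them.
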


\begin{proof}{Proof}
Differentiating \eqref{eg3} in $u_i$ and using \eqref{sym11}
and the obvious formulas
$$
\frac{\pa S}{\pa u_i}=s_i' E_{i}, \qquad \frac{\pa S'}{\pa u_i}=s_i''E_{i},
$$
etc., one obtains
\begin{equation}\label{eg5}
\left(\frac12 s_i''-p_i\right) (\Gamma
E_{i}+E_{i}\Gamma) +\frac{\pa P}{\pa u_i}=0.
\end{equation}
All entries of the matrix $\Gamma E_{i}+E_{i}\Gamma$ vanish, except
the $i$th
row and the $i$th
column, which coincide with $(\gamma_{1i}, \dots, \gamma_{ni})$.
Due to the irreducibility assumption, it follows
from \eqref{eg5} that
\begin{equation}\label{eg6}
p_i=\frac12\,s_i''.
\end{equation}
Substituting this into \eqref{eg5} yields
$$
\frac{\pa P}{\pa u_i}=0.
$$
Repeating this procedure for every $i=1, \dots, n$, one proves that
the matrix $P$ is constant. Using
\eqref{eg6}, we conclude that $s_i=s_i(u_i)$ is a quadratic polynomial, i.e.,
$s_i= p_i u_i^2+ 2 q_i u_i +
r_i$. Finally, multiplying Eq.~\eqref{eg3} by $1-GU$ on the left
and by $1-UG$ on the right, we
arrive at the quadratic equation \eqref{quad1}. \qed
\end{proof}

\begin{defn}
A symmetric matrix $G$ is called \emph{admissible} if it satisfies the matrix quadratic equation
\eqref{quad1}. A solution of the form $\Gamma=G(1-UG)^{-1}$ is called
\textit{admissible} if the parameter matrix $G$ is admissible.
\end{defn}

The matrix quadratic equation \eqref{quad1} for the symmetric matrix $G$ is a particular case
of the so-called \emph{algebraic Riccati equation} (see, e.g., \cite{lancaster}). The class of such
equations is invariant with respect to fractional linear transformations,
as the following lemma shows.

\begin{lemma}
If a symmetric matrix $G$ satisfies the matrix quadratic equation
$$
GRG +QG +GQ +P=0
$$
with some diagonal matrices $P$, $Q$, and $R$, then
the equivalent matrix $\wt{G}=(C+DG)
(A+BG)^{-1}$ satisfies an equation of the same form
$$
\wt{G}\wt{R}\wt{G} +\wt{Q}\wt{G} +\wt{G}\wt{Q} +\wt{P}=0
$$
with
\begin{equation}\label{quad22}
\begin{aligned}
\wt{P}&= D^2 P -2 CDQ +C^2 R,\\
\wt{Q}&= -BDP +(AD+BC)Q-ACR,\\
\wt{R}&= B^2 P -2 ABQ +A^2 R.
\end{aligned}
\end{equation}
\end{lemma}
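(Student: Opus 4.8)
The plan is to conjugate the asserted Riccati equation for $\wt G$ by the matrices occurring in the fractional-linear substitution, thereby turning it into the one we already have for $G$. Set $M=A+BG$ and $N=A+GB$. Since $A,B$ are diagonal and $G$ is symmetric, $N=M^{\top}$, so $\det N=\det M$, which is nonzero under the standing assumption that $\wt G=(C+DG)(A+BG)^{-1}$ is well defined. By the identity \eqref{trans33} (equivalent to the symmetry of $G$) one also has $\wt G=N^{-1}(C+GD)$, hence
$$
\wt G\,M=C+DG,\qquad N\,\wt G=C+GD.
$$

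First I would multiply the putative equation $\wt G\wt R\wt G+\wt Q\wt G+\wt G\wt Q+\wt P=0$ by $N$ on the left and by $M$ on the right and expand, grouping the factors so that the two displayed relations apply directly:
$$
N\wt G\wt R\wt G\,M=(C+GD)\wt R(C+DG),\qquad N\wt Q\wt G\,M=(A+GB)\wt Q(C+DG),
$$
$$
N\wt G\wt Q\,M=(C+GD)\wt Q(A+BG),\qquad N\wt P\,M=(A+GB)\wt P(A+BG).
$$
Because $A,B,C,D,\wt P,\wt Q,\wt R$ are diagonal, hence mutually commuting, expanding these products yields only terms of the shapes $\Delta$, $\Delta G$, $G\Delta$, and $G\Delta G$ with $\Delta$ diagonal, and collecting coefficients gives
$$
N\bigl(\wt G\wt R\wt G+\wt Q\wt G+\wt G\wt Q+\wt P\bigr)M=G\,\wh R\,G+\wh Q\,G+G\,\wh Q+\wh P,
$$
with $\wh P=A^{2}\wt P+2AC\wt Q+C^{2}\wt R$, $\ \wh Q=AB\wt P+(AD+BC)\wt Q+CD\wt R$, $\ \wh R=B^{2}\wt P+2BD\wt Q+D^{2}\wt R$. (That the coefficients of $\Delta G$ and $G\Delta$ come out equal is automatic and serves as a check on the bookkeeping.)

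It then remains only to verify $\wh P=P$, $\wh Q=Q$, $\wh R=R$; this is a direct substitution of the formulas \eqref{quad22}, in which the coefficient of $P$ in $\wh P$ collapses to $(AD-BC)^{2}=1$ while the coefficients of $Q$ and $R$ cancel (and likewise for $\wh Q$ and $\wh R$), the normalization $AD-BC=1$ being used exactly here. Conceptually, \eqref{quad22} is the linear map on the triple $(P,Q,R)$ — viewed as the diagonal entries of a binary quadratic form $Px^{2}+2Qxy+Ry^{2}$ — attached to the inverse $SL_{2}$ matrix $\begin{pmatrix}d_k&-b_k\\-c_k&a_k\end{pmatrix}$, while $(\wt P,\wt Q,\wt R)\mapsto(\wh P,\wh Q,\wh R)$ is the map attached to $\begin{pmatrix}a_k&b_k\\c_k&d_k\end{pmatrix}$; since these two are mutually inverse, their composition is the identity. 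Therefore $N\bigl(\wt G\wt R\wt G+\wt Q\wt G+\wt G\wt Q+\wt P\bigr)M=GRG+QG+GQ+P=0$, and multiplying by $N^{-1}$ and $M^{-1}$ completes the proof.

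I do not anticipate a genuine obstacle: the whole argument is a bookkeeping calculation, and the only point requiring care is keeping the order of factors straight while expanding the triple products $N\wt G\wt R\wt G\,M$, etc.\ (harmless, since the matrices other than $G$ all commute). If one prefers to avoid even that, it suffices to check the statement on the generators $u_k\mapsto u_k+b_k$, $u_k\mapsto a_k^{2}u_k$, $u_k\mapsto -1/u_k$ of $SL_{2}$, where the verification is immediate, and then appeal to composability of both sides.
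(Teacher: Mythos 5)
Your proof is correct, and it is exactly the computation the paper has in mind: the paper omits the details, stating only that the proof ``is straightforward and uses identity \eqref{trans33}'', which is precisely the identity you invoke to write $N\wt G=C+GD$ and $\wt G M=C+DG$ before conjugating by $N=A+GB$ and $M=A+BG$. The bookkeeping (the collapse of the $P,Q,R$ coefficients to $(AD-BC)^2=1$ or $0$) checks out, and your observation that $N=M^{\top}$ cleanly justifies the invertibility of $N$.
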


The proof of this lemma is straightforward
and uses identity \eqref{trans33}.\qed

\begin{cor}
The class of admissible solutions to the basic system \eqref{sym11}
is invariant with respect to the
$[SL_2]^n$ action \eqref{trans32}.
\end{cor}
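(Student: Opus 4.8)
The plan is to combine the two results immediately preceding this corollary: the Siegel-type Proposition describing how the action \eqref{trans32} transforms a solution of the form $\Gamma = G(1-UG)^{-1}$, and the Lemma showing that the algebraic Riccati equation with diagonal coefficients is stable under the substitution $G \mapsto \wt G = (C+DG)(A+BG)^{-1}$. The corollary should then be essentially a concatenation of these two statements together with the definition of admissibility.

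First I would unwind the definitions. A solution of the basic system \eqref{sym11} is admissible precisely when it can be written as $\Gamma = G(1-UG)^{-1}$ with $G$ a symmetric matrix satisfying \eqref{quad1} for some diagonal matrices $P$, $Q$, $R$. Given such a $\Gamma$, apply the transformation \eqref{trans32} with diagonal $A, B, C, D$ satisfying $AD-BC = 1$. Assuming $\det(A+BG) \ne 0$, the Siegel-type Proposition identifies the transformed matrix $\wt\Gamma$ with $\wt G(1 - \wt U\wt G)^{-1}$, where $\wt G = (C+DG)(A+BG)^{-1}$; by identity \eqref{trans33} this $\wt G$ is again symmetric. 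Then I would invoke the preceding Lemma: since $G$ satisfies $GRG + QG + GQ + P = 0$, the matrix $\wt G$ satisfies an equation of the same form with the diagonal matrices $\wt P, \wt Q, \wt R$ given explicitly by \eqref{quad22}. Hence $\wt G$ is admissible, and therefore $\wt\Gamma$ is an admissible solution of \eqref{sym11}.

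The only point requiring care is the nondegeneracy hypothesis $\det(A+BG) \ne 0$ needed to apply the Siegel-type Proposition. I would dispose of it by a genericity argument: the set of $([a_k,b_k,c_k,d_k])_k \in [SL_2]^n$ for which $\det(A+BG) = 0$ is a proper algebraic subset, on whose complement the argument above applies verbatim; since $[SL_2]^n$ is generated by any neighbourhood of the identity and admissibility is a closed condition compatible with composition of the transformations \eqref{trans32}, the invariance propagates to the whole group. Alternatively, one can observe that the formulas \eqref{trans32} are well defined whenever $CU+D$ is invertible and verify directly, by analytic continuation in $\mathbf u$, that the transformed $\wt\Gamma$ retains the admissible form $\wt G(1-\wt U \wt G)^{-1}$. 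Either way, this is the sole genuinely non-formal step, the remainder being a direct appeal to the two preceding results.
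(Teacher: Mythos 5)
Your proposal is correct and follows exactly the route the paper intends: the corollary is stated as an immediate consequence of the Siegel-type Proposition (giving $\wt\Gamma=\wt G(1-\wt U\wt G)^{-1}$ with $\wt G=(C+DG)(A+BG)^{-1}$) combined with the Lemma that the Riccati equation \eqref{quad1} transforms into one of the same form via \eqref{quad22}. Your extra care about the hypothesis $\det(A+BG)\neq 0$ is a reasonable refinement that the paper silently omits.
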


The entries $\Delta_1,\dots,\Delta_n$ of the diagonal matrix
\begin{equation}\label{delta0}
\Delta=Q^2-PR
\end{equation}
are invariants of the $[SL_2]^n$ action \eqref{quad22}.

The next step is to parameterize linearly degenerate solutions to the associativity equations by
solutions to the algebraic Riccati equation \eqref{quad1} with prescribed coefficients
satisfying the condition
$$
|p_i|^2+|q_i|^2+|r_i|^2\neq 0, \qquad i=1, \dots, n.
$$
Let us first simplify the matrix quadratic equation by means of transformations
\eqref{quad22}.

\begin{lemma}\label{Lem47} (1) For an irreducible admissible matrix $G$,
the matrix quadratic equation
\eqref{quad1} is equivalent, up to transformations \eqref{quad22}, to
\begin{equation}\label{canon1}
G^2=\Delta,
\end{equation}
where $\Delta$ is given by \eqref{delta0}.

(2) For an admissible irreducible $G$, the matrix $\Delta$ is proportional
to the identity matrix, i.e.,
$$
\Delta_1=\dots =\Delta_n=:\rho^2.
$$
\end{lemma}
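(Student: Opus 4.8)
The plan is to prove the two parts in order, using the $[SL_2]^n$-equivalence \eqref{quad22} together with the explicit invariants $\Delta_i=q_i^2-p_ir_i$.

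\medskip

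\textbf{Part (1): Reduction to $G^2=\Delta$.} First I would observe that for each index $i$ the $SL_2$-action \eqref{quad22} on the triple $(p_i,q_i,r_i)$ is (up to the standard isomorphism) the action of $SL_2$ on binary quadratic forms $p_i\xi^2+2q_i\xi\eta+r_i\eta^2$, whose discriminant $\Delta_i=q_i^2-p_ir_i$ is the unique invariant. Since the coefficient vector $(p_i,q_i,r_i)$ is nonzero, the quadratic polynomial $p_iu_i^2+2q_iu_i+r_i$ is not identically zero, so by a fractional-linear change $u_i\mapsto(a_iu_i+b_i)/(c_iu_i+d_i)$ I can bring this polynomial to one of the two normal forms: the constant $\Delta_i$ (when $\Delta_i\ne0$, more precisely to $p_i=r_i=0$, $q_i^2=\Delta_i$, i.e. after a further scaling to $q_i=0$, $r_i=\Delta_i$, $p_i=0$) or, when $\Delta_i=0$, to $r_i=1$, $q_i=p_i=0$. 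The key point is that by the corollary after Lemma~4.5 this same transformation carries $G$ to an equivalent admissible matrix, and by Lemma~4.6 the transformed matrix satisfies \eqref{quad1} with the transformed diagonal coefficients. Performing this for every $i$, the equation becomes $\widetilde QG=G\widetilde Q=0$ with $\widetilde Q=0$ and $\widetilde R=1$, so \eqref{quad1} collapses to $\widetilde G\,\widetilde G+\widetilde P=0$, i.e. $\widetilde G^2=-\widetilde P$; and one checks $-\widetilde P$ equals $\widetilde\Delta=\Delta$ since $\Delta$ is $[SL_2]^n$-invariant. I would need to verify that the fractional-linear change used is one of the allowed transformations \eqref{trans3} (it is, being an element of $SL_2(\mathbb{R})$ for each $k$, possibly after allowing complex coefficients as the statement of Theorem~\ref{mainth} does), and that irreducibility of $\Gamma$ — hence of $G$ — is preserved, which is immediate since conjugation/scaling by invertible diagonal matrices cannot create a zero row off the diagonal.

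\medskip

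\textbf{Part (2): $\Delta$ is scalar.} Now assume $G$ is irreducible admissible and, by Part (1), that $G^2=\Delta=\op{diag}(\Delta_1,\dots,\Delta_n)$. Writing this out entrywise, the $(i,j)$ entry of $G^2$ is $\sum_k g_{ik}g_{kj}$, which for $i\ne j$ must vanish, and for $i=j$ equals $\Delta_i$. I would compare the $(i,j)$ and $(j,i)$ diagonal-adjacent structure: from $G^2=\Delta$ one gets $\Delta G = G\Delta = G^3$, hence $\Delta G = G\Delta$, i.e. $(\Delta_i-\Delta_j)g_{ij}=0$ for all $i,j$. Therefore $\Delta_i=\Delta_j$ whenever $g_{ij}\ne0$. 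It remains to see that irreducibility forces all the $\Delta_i$ to coincide: define a graph on $\{1,\dots,n\}$ with an edge $i\sim j$ iff $g_{ij}\ne0$; then $\Delta$ is constant on each connected component. Irreducibility says that no row of $G$ vanishes off the diagonal, i.e. every vertex has degree $\ge1$, but this alone does not yet give connectedness — so I would argue that if the graph disconnected into blocks, then (after the common permutation of the $u_i$'s, which is an allowed symmetry) $\Gamma=G(1-UG)^{-1}$ would be block-diagonal, contradicting irreducibility of the \emph{solution} $\Gamma$ in the sense of Definition~4.2 (a genuinely block-decomposed $\Gamma$ reduces to lower-dimensional pieces). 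Hence the graph is connected and $\Delta_1=\dots=\Delta_n=:\rho^2$.

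\medskip

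The main obstacle I anticipate is the bookkeeping in Part (1): the normalization of the quadratic $p_iu_i^2+2q_iu_i+r_i$ must be done by an \emph{$SL_2$} element (determinant one), and one must track carefully how $P,Q,R$ transform under \eqref{quad22} to be sure the end result is exactly $G^2=\Delta$ rather than $G^2=\lambda\Delta$ for some spurious scalar — this is where invariance of $\Delta$ under \eqref{quad22} does the essential work and must be invoked cleanly. The degenerate case $\Delta_i=0$ (where the quadratic has a double root and can only be normalized to $r_i=1$, not to a nonzero constant) needs separate mention, but after Part (1) it simply means $\rho=0$, consistent with the limiting convention in Theorem~\ref{mainth}. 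Part (2) is comparatively soft once one has the identity $[\Delta,G]=0$, the only subtlety being the precise meaning of irreducibility for $\Gamma$ versus $G$, which I would pin down via Definition~4.2 and the block structure of $G(1-UG)^{-1}$.
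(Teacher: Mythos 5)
Your overall strategy coincides with the paper's (normalize the binary quadratic forms $(p_i,q_i,r_i)$ by the $[SL_2]^n$ action and use the invariance of $\Delta=Q^2-PR$), and your Part (2) is sound — indeed the commutator identity $\Delta G=G^3=G\Delta$, giving $(\Delta_i-\Delta_j)g_{ij}=0$, is a slightly more elementary route than the paper's appeal to orthogonality of root subspaces of a symmetric matrix, and both arguments lean on the same block-diagonal reading of irreducibility that you correctly flag. But Part (1) has a genuine gap: you identify the "main obstacle" as the $SL_2$ bookkeeping on the coefficients, whereas the real obstacle is that the transformation \eqref{trans31}, $\wt G=(C+DG)(A+BG)^{-1}$, is only defined when $\det(A+BG)\neq 0$, and the element of $SL_2$ that normalizes the quadratic form at an index $i$ with $r_i=0$ necessarily has $b_i\neq 0$, so this nondegeneracy is a genuine condition on $G$ that can fail. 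This is exactly the point where the paper's proof does real work: when all $r_i\neq 0$ one can use the purely affine transformation $G\mapsto A^{-1}GA^{-1}+CA^{-1}$ (i.e. $B=0$, always applicable) to reach $R=1$, $Q=0$, $P=-\Delta$; when some $r_i=0$ one applies a specific fractional-linear move whose obstruction is $\det(1-E_i(1+G))=\pm g_{ii}$, and irreducibility is invoked precisely to shift the base point ${\bf u}^0$ so that $\gamma_{ii}({\bf u}^0)\neq 0$ (possible because $\pa\gamma_{ii}/\pa u_k=\gamma_{ik}^2$ is not identically zero). Your proposal never confronts this applicability issue, and it is the only place in Part (1) where irreducibility is actually needed.

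Two smaller points. First, your stated normal form for the nondegenerate case is internally inconsistent: "$q_i=0$, $r_i=\Delta_i$, $p_i=0$" has discriminant $0$, not $\Delta_i$; the correct target, which you do use later, is $r_i=1$, $q_i=0$, $p_i=-\Delta_i$ (and $r_i=1$, $p_i=q_i=0$ when $\Delta_i=0$). Second, the normalization generally requires $SL_2(\mathbb C)$ rather than $SL_2(\mathbb R)$ (e.g. the square roots of $q_i$ or $\Delta_i$ appearing in the scaling), which you do note in passing; this is consistent with the paper's allowance of a complex parameter $\rho$ but should be stated as part of the equivalence being used.
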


\begin{proof}{Proof}
If all entries of the matrix $R$ are different from zero, then
Eq.~\eqref{quad1} can be
reduced to the canonical form \eqref{canon1} by a transformation of the form
$$
G\mapsto AGA+B
$$
with suitable diagonal matrices $A$ and $B$. This is a particular class of transformation
\eqref{quad22}. If $r_i=0$ for some $i$, then one can assume that $p_i\neq 0$. Let us apply
the fractional linear transformation of the form \eqref{trans31} with $A=1-E_{i}$, $B=-E_{i}$,
$C=E_{i}$, and $D=1-E_{i}$, that is, $G\mapsto \wt{G}=[G+E_i (1-G)] [
1-E_{i}(1+G)]^{-1}$, where the matrix $E_{i}$ is of the form \eqref{unity}. Such a transformation is
applicable only if the matrix $1-E_{i}(1+G)$ is nonsingular. It is easy to see that the determinant of this matrix is equal to $\pm
g_{ii}=\gamma_{ii}(0)$. If $g_{ii}=0$ but the solution is irreducible, then one can perform a
shift ${\bf u}\mapsto {\bf u}+{\bf u}^0$ to obtain a matrix $G'=\Gamma({\bf
u}^0)$ with $g'_{ii}\neq 0$. After the transformation, one obtains $\tilde r_i=p_i\neq 0$.

To prove the second part of the lemma,
it suffices to observe that any eigenvector $f$ of
the matrix $G$ with eigenvalue $\lambda$ is an eigenvector
of $G^2$ with eigenvalue
$\lambda^2$. So, if $e_i$ and $e_j$ are the $i$th and $j$th basic
vectors and $\Delta_i\neq
\Delta_j$, then these vectors belong, respectively,
to the sums of root subspaces
$R(\sqrt{\Delta_i})\oplus R(-\sqrt{\Delta_i})$
and $R(\sqrt{\Delta_j})\oplus R(-\sqrt{\Delta_j})$ of
the matrix $G$. Such root subspaces of symmetric matrices
are orthogonal; hence the matrix $G$ must
have block-diagonal form in the same basis.\qed
\end{proof}

The main Theorem \ref{mainth} readily follows from the above considerations.

Recall that the reconstruction of the solution to the associativity equations with given rotation
coefficients \eqref{glav1} depends on the choice of a solution to the linear system \eqref{egor2},
\eqref{egor3}. Below we apply this procedure to produce examples of linearly degenerate
 WDVV solutions. It is convenient to separately consider the cases $\rho\neq 0$ and $\rho=0$.

\textbf{Case 1.} The eigenvalues of a symmetric matrix $G$
satisfying $G^2 =\rho^2\cdot 1$ are equal to
$\pm \rho$. Let $k$ denote the number of eigenvalues
equal to $-\rho$. We consider the case $k=1$ in more
detail. It is more convenient to deal with the matrix
$\wt{G}=G-\rho\cdot 1$, which
satisfies the equation $\wt{G}^2 +2\rho\wt{G}=0$. In
the case $k=1$, this
matrix can be represented in the form
$$
\wt{G}=(\omega_i \omega_j), \qquad \sum_{i=1}^n \omega_i^2=-2\rho.
$$
To this matrix there corresponds a family of solutions of the
form \eqref{exam1}.
The substitution $\tilde
u_k = -\log [ \omega_k^2 (u_k-u_k^0)]$, $k=1, \dots, n$, with arbitrary
constants $u_k^0$ satisfying
$\sum_{k=1}^n u_k^0=0$ yields the following rotation coefficients satisfying the Darboux--Egorov
equations:
$$
\tilde\gamma_{ij}=\frac{e^{-(\tilde u_i +\tilde u_j)/2}}{\sum_{k=1}^n e^{-\tilde u_k}}, \qquad i\neq
j.
$$
In the sequel, we omit the tildes.
System \eqref{egor2}--\eqref{egor3} can be easily solved:
\begin{gather*}
\psi_{ii}=\frac{2e^{-u_i}}{D}-1,\quad \psi_{ij}=\frac{2e^{-(u_i + u_j)/2}}{D},\quad i\neq
j,\quad\text{where }\, D=\sum_{k=1}^n e^{-u_k}.
\end{gather*}
The calculation of the quadratures \eqref{egor4}
gives the following expression for the matrix $\Omega$
of the second derivatives of the potential (see \eqref{egor41}):
\begin{equation}\label{rank3}
\Omega_{ij}=u_i \delta_{ij}+\frac{4 e^{-(u_i+u_j)/2}}{D}\,.
\end{equation}

Flat coordinates are obtained by choosing a linear combination of the
columns of this matrix. The
choice of the first column yields the Egorov metric
$$
ds^2 =\left(1 - 4\,\frac{e^{-u_1}}{D}\right) du_1^2 +4 \sum_{i=1}^n
\frac{e^{-u_1-u_i}}{D^2}\,du_i^2
$$
with the flat coordinates
$$
v_1=u_1 +\frac{4 e^{-u_1}}{D}, \qquad v_i = \frac{4 e^{-(u_1+u_i)/2}}{D}\quad\text{for }\,i\neq 1.
$$
Solving these equations for the canonical coordinates $u_i$, we obtain
$$
u_1=v_1-\sqrt{4-\sigma} -2, \qquad u_i=v_1 -\sqrt{4-\sigma}-2 +2\log\frac{2+\sqrt{4-\sigma}}{v_i}
\quad\text{for }\,i\neq 1
$$
with $\sigma=\sum_{k=2}^n v_k^2$, and integrating
quadratures \eqref{rank3}, we arrive at the
following expression for the potential being the corresponding linearly
degenerate solution to the WDVV
associativity equations:
\begin{equation}\label{konec}
F=\frac16\,v_1^3 +\frac12\,v_1 \sigma -\sum_{k=2}^n v_k^2 \log v_k -\frac13\,
(2+\sigma)\sqrt{4-\sigma} +\sigma\log(2+\sqrt{4-\sigma}).
\end{equation}

One can also obtain an explicit realization of the integrable hierarchy associated, in the sense of
\cite{npb}, with \eqref{konec}. Recall that the hierarchy is
an infinite family of commuting flows labeled by pairs $(\alpha, p)$, $\alpha=1, \dots, n$, $p=0,
1,2, \dots$. The flows have the form
$$
\frac{\pa v^\gamma}{\pa t^{\alpha,p}} = \partial_x (\nabla^\gamma \theta_{\alpha,p+1}(v)).
$$
The generating functions
$$
\theta_\alpha(v,z)=\sum_{p=0}^\infty \theta_{\alpha, p}(v) z^p
$$
of $\theta_{\alpha,p}(v)$ (\emph{deformed flat coordinates}) can be found
in quadratures; we have
$$
d\theta_\alpha(v,z) =\sum_{i=1}^n h_i \Psi_{i\, \alpha} du_i, \qquad \alpha=1, \dots, n,
$$
where the $\Psi_{i\alpha}(v,z)$, $\alpha=1, \dots, n$,
form a basis for the ``wave functions" determined by the
system
\begin{align*}
\frac{\pa \Psi_i}{\pa u_j}&=\gamma_{ij} \Psi_j, \qquad i\neq j,\\
\sum_{k=1}^n \frac{\pa \Psi_i}{\pa u_k}&=z\Psi_i.
\end{align*}
The basis $\Psi_{i\alpha}$ can be conveniently orthonormalized by the conditions
$$
\sum_{\alpha=1}^n \Psi_{i\alpha}(v, -z)
\Psi_{j\alpha}(v,z)=\delta_{ij}.
$$
In our case the normalized wave functions have the form
$$
\Psi_{i\alpha}=\frac{2 e^{z\, u_\alpha}}{\sqrt{1-4 z^2}}\bigg[\bigg(z-\frac12\bigg)
\delta_{i\alpha} +\frac{e^{-\frac{u_i+u_\alpha}2}}{D}\bigg].
$$
This gives
$$
\theta_\alpha=\frac1{\sqrt{1-4 z^2}} \bigg\{\bigg[ \frac1{z}(e^{zu_1} -1) -e^{zu_1} (u_1+2)+2
\bigg] \delta_{\alpha 1} +v_\alpha e^{zu_\alpha}\bigg\}, \qquad \alpha=1, \dots, n.
$$

\textbf{Case 2.} Now, consider the second type of
solutions, namely, those parametrized by symmetric matrices
$G$ satisfying $G^2=0$. In this case, one again obtains a solution
to the WDVV equations
which satisfies
the quasihomogeneity condition.

All eigenvalues of $G$ are equal to $0$. All Jordan blocks are of order $1$ or
$2$. Consider the simplest case of only one block of order $2$. The entries
of the matrix $G=(g_{ij})$ can be written in the form
$$
g_{ij}= \omega_i \omega_j, \qquad \sum_{i=1}^n \omega_i^2=0.
$$
The corresponding solution to the WDVV system can be obtained
from the trivial (i.e., cubic) solution
$$
F(v)=\frac16 \sum_{i,j,k} c_{ijk} v^i v^j v^k
$$
by applying the inversion symmetry described in \cite{monte}
(see Appendix B and Proposition 3.14 in~\cite{monte}).
Here the $c_{ijk}$
are the structure constants of the semisimple Frobenius algebra
$$
\mathcal{A}=\op{span}(e_1, \dots, e_n), \quad \langle e_i\cdot e_j, e_k\rangle =c_{ijk}, \quad
\langle e_i, e_j\rangle=\delta_{i+j, n+1}
$$
with a unit $e_1$ and trivial grading
$\op{deg}e_i =0$ for all $i$. Recall that the
structure constants can be represented in the form
$$
c_{ijk}=\sum_{s=1}^n \frac{a_{si} a_{sj} a_{sk}}{a_{s1}},
$$
where the matrix $(a_{ij})$ satisfies the condition
$$
\sum_{s=1}^n a_{si} a_{sj}=\delta_{i+j,n+1}.
$$
For our construction, we can choose the matrix in such a way that
$$
a_{i1}=\omega_i, \qquad i=1, \dots, n.
$$
After the substitution of
\begin{align*}
\hat v^1&=\frac12\,\frac{v_\alpha v^\alpha}{v^n},\\
\hat v^\alpha&=\frac{v^\alpha}{v^n},\qquad \alpha\neq 1,n,\\
\hat v^n&=-\frac1{v^n}
\end{align*}
one obtains the needed solution $\wh{F}$ to the WDVV equations in the form
\begin{equation}\label{triv5}
\wh{F}(\hat v) =\frac12\,\hat v^1 \hat v_\alpha \hat v^\alpha+(\hat v^n)^{2} F(v)=\frac12 (\hat
v^1)^2 \hat v^n +\frac12 \sum_{\alpha=2}^{n-1} \hat v^1\hat v^\alpha \hat v^{n-\alpha+1}+\frac{P(\hat v^2, \dots, \hat v^{n-1})}{\hat v^n}\,.
\end{equation}
Here\ft{This example was considered in \cite{leur}
in a different context. Our formula
\eqref{triv5} differs from that given in
\cite{leur}.} $P(\hat v^2, \dots, \hat v^{n-1})$ is
a certain
polynomial of degree 4. The potential $\wh{F}$ satisfies the quasihomogeneity condition
$$
\wh{E}\wh{F} = \wh{F}, \qquad \wh{E} = \hat v^1 \frac{\pa}{\pa \hat v^1} -\hat v^n \frac{\pa}{\pa
\hat v^n}\,.
$$

\vskip10pt

\end{document}